\newcommand{\ignore}[1]{}
\newtheorem{theorem}{Theorem}[section]
\newtheorem{lemma}[theorem]{Lemma}
\newtheorem{proposition}[theorem]{Proposition}
\newtheorem{corollary}[theorem]{Corollary}
\newtheorem{definition}{Definition}[section]
\newcommand{\budget}{budget in tokens}
\begin{document}

    \title{Competitive Equilibria with Unequal Budgets: \\
      Supporting Arbitrary Pareto Optimal Allocations\thanks{
The work of Michal Feldman and Nir Andelamn has received funding from the European Research Council (ERC) under the European Union's Horizon 2020 research and innovation program (grant agreement No. 866132), and from the Israel Science Foundation (grant number 317/17). The work of Yishay Mansour and Nir Andelamn has received funding from the European Research Council (ERC) under the European Union's Horizon 2020 research and innovation program (grant agreement No. 882396),  by the Israel Science Foundation (grant number 993/17)  and the Yandex Initiative for Machine Learning at Tel Aviv University. The work of Amos Fiat received funding from the Israel Science Foundation (grant number 1841/14) and the Blavatnik fund.}}

\author{
    Nir Andelman\thanks{Afeka; {\tt nir.andelman@gmail.com}}
    \and
    Michal Feldman\thanks{Blavatnik School of Computer Science, Tel-Aviv University; {\tt mfeldman@tauex.tau.ac.il}}
    \and
    Amos Fiat\thanks{Blavatnik School of Computer Science, Tel-Aviv University; {\tt fiat@tau.ac.il}}
    \and
    Yishay Mansour\thanks{Blavatnik School of Computer Science, Tel-Aviv University and Google Research; {\tt mansour.yishay@gmail.com}}
}

\maketitle

\begin{abstract}
We consider a market setting of agents with additive valuations over heterogeneous divisible resources. Agents are assigned a budget of tokens (possibly unequal budgets) they can use to obtain resources; leftover tokens are worthless.
We show how to support any Pareto efficient allocation in equilibrium, using anonymous resource prices and agent specific budgets.
We also give computationally efficient algorithms for those tasks. In particular, this allows us to support the Rawlsian max-min allocation.
\end{abstract}

%\vspace{2cm}
%
%\tableofcontents

\section{Introduction}

We consider a set of heterogeneous divisible resources, and a set of agents, each of which has (possibly different) valuations for the resources, and the value for a collection of [fractional] resources is additive.

Equilibrium pricing of markets appears in many guises: 

\begin{enumerate}
	\item Leon Walras studied market clearing prices (Walrasian Equilibrium), as far back as 1874 \cite{walras1874}. This is a quasi linear model where the utility of an agent is the value of the goods minus the payments used to purchase them. Conditions and algorithms to find Walrasian equilibrium prices were studied in \cite{kelso1982job,GulStacccetti} and many subsequent papers, see survey in \cite{renatosurvey}. 
	\item Irving Fisher's 1891 PhD thesis \cite{BrainardScarf} introduced price equilibrium for Fisher Markets. The utility is the sum of valuations of the goods received. Tokens are used to obtain and pay for goods but the tokens themselves have no value. Every agent $i$ has a budget $b_i$ in tokens, and valuations for the various goods, the key problem studied in the literature is to compute prices so that competitive equilibria exist. For algorithmic perspectives see \cite{Vajiranifisher,CVfisher,Cole17}. 
	\item Arrow and Debreu \cite{Arrow51,Debreu51,ArrowDebreu} consider a much more general problem and 
	defined a competitive equilibrium as an assignment of prices to the resources and an associated allocation so that the allocation to every agent maximizes the agent utility over all other allocations (where money is considered one of the resources and the model need not treat money linearly)\footnote{We remark they actually address a more general setting where agents have endowments and they can both produce and consume goods.}.
	\item To study fair allocation of limited resources, Varian \cite{Varian74} considers the problem of competitive equilibria from equal incomes (CEEI). In a competitive equilibria from equal incomes   resources are assigned prices, all agents get one token to spend on purchasing such resources, and there is an associated  allocation of resources such that: (i) every agent is maximally happy subject to the prices and her budget. For CEEI any unused budget is worthless so all the initial ``income'' is used up, and all resources are distributed.
\end{enumerate}

In the settings above, anonymous prices are set for every resource thus achieving a measure of transparency. No agent ever wants to switch her allocation with that of another (which paying the same price as the other), ergo, the allocation is envy-free. Public anonymous prices are a major component of transparency in markets. Pareto-efficient is a very desirable goal of market pricing \cite{Stiglitz81}. Social welfare maximization is an example of a Pareto-efficient allocation but not the only one. 

% The resulting allocation The questions typically asked are when do such competitive equilibria exist. The questios have been studied at depth in both divisible and indivisible settings, and whet is the complexity of producing such prices when possible?  

We consider a setting with agents with additive valuations over heterogeneous divisible resources. Agents are allocated a budget of tokens (possibly unequal budgets) they can use to obtain resources; leftover tokens are worthless.
We wish to support any Pareto efficient allocation in equilibrium, using anonymous resource prices and agent specific allocation of budgets.

Our main result is the following:

\vspace{0.1in}
\noindent {\bf Main result:} Given heterogeneous divisible resources, additive valuations, and an arbitrary Pareto optimal allocation $x$, we find resource prices and assign every agent a budget (in tokens) supports allocation $x$; i.e., where everyone is maximally happy given their tokens and resource prices, and all resources are fully allocated.
The equilibrium prices and token budgets can be computed in polynomial time. As tokens have no value, this is effectively in the linear Fisher market model.\footnote{Note that it is trivial to achieve {\sl some} Pareto optimal allocation supported by prices and budgets, for example set non-zero prices to all items, give one agent a budget that suffices to purchase all items, and give all other agents zero budget. The point is that we can support {\sl any} Pareto optimal allocation.}

\vspace{0.1in}
Consider the following toy example: \begin{itemize} \item 2 agents $i$ and $j$, 1 resource. Agent $i$ has value 1 for the entire resource, agent $j$ has value 99 for the entire resource. We've [arbitrarily] decided upon an allocation $x$ where agent  $i$ receives $99\%$ of the resource and agent $j$ receive $1\%$ of the resource, respectively. Note that both agents have the same utility from their allocation. 
	\item Set prices $p$ so that the resource has a price of 1 token, set budgets $b$ so agent $i$ gets budget $b_i=0.99$ tokens, and agent $j$ gets $b_j = 0.01$ tokens.  Now, given allocation $x$, both $i$ and $j$ are maximally happy with their allocations, subject to the price of the resource and their budget in tokens.
\end{itemize}
We call this a {\sl competitive equilibria with unequal budgets supporting allocation $x$}. We remark that the allocation $x$ is not envy-free as agent $j$ would be much happier with the $99\%$
of the resource assigned to agent $i$ than with the $1\%$ she has in hand, but --- given the number of tokens she has, she can do no better.

%Our main result is that in any setting with divisible goods, any number of agents, additive valuations, all valuations $>0$, and any arbitrary Pareto optimal allocation of the items among the agents, there exist prices and budgets that support this allocation. {\sl I.e.,} all items are fully allocated, all token budgets are fully utilized, and every agent is maximally happy with her allocation.

If our goal was to maximize social welfare we would give everything to the agent that values it most, in this case agent $j$. Arrow-Debreu competitive equilibrium prices can be any price $1 < p < 99$ for the resource. This will ensure that only agent $j$ will seek to purchase the resource and has reason to do so.

Rawls \cite{Rawls99} and his notion of justice is (arguably) to improve the lot of the weakest member of society, the so called Max-Min allocation\footnote{There are other notions of fair allocations also called max-min but are quite distinct from the Rawls notion, see \cite{Budish2011}.}. If we were to follow Varian's approach and give ``equal income" (in tokens) to both agents $i$ and $j$, the only possible CEEI is to split the item equally between the two agents (the price would be the total number of tokens assigned). This would indeed be envy free, but the utility to agent $i$ would be $1/2$ whereas the utility to agent $j$ would be $49.5$ --- not quite meeting Rawlsian justice.

In fact, in the toy example above we arbitrarily decided upon a partition of the item between agents $i$ and $j$. The partition is Pareto optimal because both agents desire the entire item. But in fact this specific partition also maximizes the Rawls max-min justice criteria. That said, any other partition would also be Pareto optimal. It now follows from our more general main result that any such allocation can be supported by appropriate prices and token budgets.

Recap: Walras, Fisher, and Arrow-Debreu seek a social welfare maximizing allocation supported by prices, Varian seeks fairness by using equal incomes. Contrawise, we start with an arbitrary Pareto optimal allocation, and show how to support this allocation with prices and token budgets. Our work can be viewed as a generalization of Walras and Fisher from the social welfare maximizing goal, to more general goals. One such goal is max-min allocations but one can consider weighted versions of such allocations and many other alternatives.
To the best of our knowledge, we are the first to give a competitive equilibrium for arbitrary Pareto optimal allocations. We achieve this result by allowing unequal budgets, while maintaining anonymous prices.

\subsection{Roadmap for this paper}

The model and problem are described in Section \ref{sec:model}. 

The proof of the main result appears in  Section \ref{sec:pricingbudgets}. The proof includes graph theoretic constructions as well as Brouwer fixed point arguments. A high level overview of the proof is provided in Section \ref{sec:overviewpricing} and the details appear subsequently.   

While the scenario assumed in Section \ref{sec:pricingbudgets} is that some arbitrary desirable Pareto optimal allocation is determined extraneously, in Section \ref{sec:maxmin} we specifically consider the [Pareto optimal] Rawsian max-min allocation and show that it can be computed efficiently. In special cases such as 2 agents or 2 items it becomes particularly efficient to do so. 

We conclude in Section \ref{sec:discussion} with a discussion and open problems.

\section{Model and Problem}
\label{sec:model}

There is a set $A$ of $n$ agents $\{1, \ldots , n\}$ and a set $I$ of $m$ fractionally divisible
items. Agent $i$ has value $v_{i,j}>0$ for item $j$, {\sl i.e.}, all valuations are strictly positive. 

An allocation $x= \{x_{i,j}\}$, $0\leq x_{i,j} \leq 1$, has the property that for any item
$j$ we have $\sum_{i=1}^n x_{i,j}\leq 1$. We also refer to the allocation for agent $i$, $x_i=x_{i,1}, \ldots, x_{i,m}$, where $x_{ij}$ is the fraction of item $j$ allocated to agent $i$.   
%\mfnote{I commented out the notations for all possible allocations.}
%Let $X$ be the set of all possible allocations and $X_i$ the set of all possible allocations for agent $i$ (some allocation $x\in X$ restricted to agent $i$).

We consider additive valuations. I.e., the valuation of agent $i$ for an allocation $x_i$ is
$$v_i(x_i) = \sum_{j \in I} x_{i,j}v_{i,j}.$$

\subsection{Main problem} \label{The problem}

Given some Pareto optimal allocation $x$, our goal is to compute both prices and budgets so that $x$ is a competitive equilibrium of the resulting market.
%\mfnote{I removed the word Fisher to avoid confusion}
That is, we seek to assign tokens to agents, and prices in tokens to items, so that every agent is maximally happy with her allocation $x_i$, based on the utility function specified below. 

Tokens are divisible. Every agent $i\in A$ has a budget $b_i$ (in tokens), and every item $j\in I$ has a price $p_j$ (in tokens).

\begin{definition} \label{def:utility} Given prices $p_j$, $j\in I$, and a budget of $b_i$ tokens, the utility to agent $i$ of an allocation
$x_i$, is $$u_i(x_i,p,b_i) = \left\{
\begin{array}{ll}
    \sum_{j=1}^m x_{i,j} v_{i,j}, & \hbox{when $\sum_{j\in I} x_{i,j} p_j \leq b_i$;} \\
    -1, & \hbox{otherwise.}
\end{array}
\right.$$
\end{definition}

%\nanotenew{We can use $v_i(x_i)$ instead of the first summation to simplify. Also, not a big fan of $-\infty$, we should consider using $0$ or $-1$. Finally, this is just the valuation function restricted to budget feasable allcations, so maybe we can replace this entirely with $v(x_1, p, b_i)$. Currently, we don't use "u()" very often.}

%To simplify, when the budget $b_i$ and the prices $p$ are known by context, we use the notation $u_i(x_i)$ instead of $u_i(x_i, p, b_i)$.

Given a price vector $p$ and a budget vector $b$, an allocation $x_i$ is said to be in the demand set of agent $i$ if $u_i(x_i,p,b_i) = \max_{x'_i\in X_i} u_i(x'_i,p,b_i)$, where $X_i$ is the set of all possible allocations for agent $i$.

A tuple $(p,b)$ of price and budget vectors is said to support allocation $x$ if for every agent $i$, $x_i$ is in the demand set of $i$ given prices $p$ and budget $b_i$.

A tuple of prices $p$, budgets $b$, and allocation $x$ is a {\em competitive equilibrium} if the tuple $(p,b)$ supports the allocation $x$.

%\nanotenew{commented out some stuff that seemed irrelevant. Also, most of this chapter moved toward the end, where we really discuss max-min}
%We seek an allocation $x_{i,j}$ that achieves the max-min social welfare while for all $i\in A$ we have that $x_i$ is in the demand set of agent $i$.

%{\sl Note that the tie breaking rule is determined by the algorithm designer.}

%\afnote{Question: can one ask for any Pareto-optimal allocation with the same algorithm (to be defined)?}

\section{Pricing and Budgets that Support Arbitrary Pareto Optimal allocations}
\label{sec:pricingbudgets}

\subsection{Overview} \label{sec:overviewpricing}

Given a Pareto optimal allocation $y$,  our goal is to establish prices $p$ and budgets $b$ that support this allocation in a competitive equilibrium.
%explain:cylcle free
We first modify the initial allocation $y$ to an allocation $x$ that has``less sharing'' of items between agents, while preserving the agent utilities. 
(Later, we will show that the prices and budgets supporting allocation $x$ also support the original allocation $y$.)

The proof proceeds as follows: 

We construct a bipartite graph $G(y)$ where vertices on the right correspond to agents and vertices on the left correspond to items. 
We add an edge between vertices $i$ and $j$ if agent $i$ is allocated a non-zero quantity of item $j$ in $y$.

Our first goal is to find an allocation $x$, such that every agent $i$ has the same utility as in allocation $y$, and the corresponding graph of $x$, $G(x)$, is acyclic (see Section~\ref{sec:cycle-elimination}). We give a constructive proof where we generate flow along a cycle (adjusting allocations) while maintaining agent utilities --- eventually the cycle is broken (see \Cref{prop:cycle-free-pareto}). We repeat this, and appropriately update the graph, until all cycles are removed. Let $x$ denote the resulting allocation.  
By construction, $G(x)$ is a forest. Note that every agent and every item belongs to exactly one tree. 

Our next step is to determine prices and budgets for items and agents within a single tree of the forest (see Section~\ref{sec:price-tree}). Conceptually, we set item prices such that every agent is indifferent between all of her adjacent items ({\sl i.e.}, items for which she has a non-zero allocation). We set some arbitrary agent to be the root of the tree and set prices to her adjacent items so that the agent will be indifferent between them. We now price the remaining tree going top down from the root. Every agent we encounter along the way has exactly one item that is already priced, and we price the remaining items adjacent to her so that she is indifferent (using her valuations). 

This shows that agents cannot benefit by shifting tokens from one adjacent item to another. We still need to argue that agents will not seek to deviate by shifting tokens to items for which they have zero allocation; this is proved in \Cref{lem:devtree}.

If the outcome of the cycle elimination process results in $G(x)$ being a single tree, we are done. It remains to consider the case where there are multiple trees, each of which has its own pricing (see Section~\ref{sec:price-forest}). 
To this end, we compute tree multipliers $\alpha_T$ for each tree $T$, which are used to scale both prices and budgets associated with a tree. Our goal is to find tree multipliers such that the resulting prices and budgets constitute a competitive equilibrium supporting the  allocation $x$. 
We first prove the existence of such multipliers, using Brouwer's fixed point theorem \cite{brouwer1911abbildung, brouwer1952intuitionist} (see \Cref{prop:alphas}).
Then, equipped with the existence of such multipliers, we give an efficient algorithm to compute them, using linear programming.

The last stage in our construction is to show that the prices and budgets that support the allocation $x$ also support the original Pareto optimal allocation $y$ (See \Cref{thm:origalloc}). 
%back to $x$

\subsection{Graph representation and Cycle Elimination}
\label{sec:cycle-elimination}

Consider some arbitrary Pareto optimal allocation $y$. Recall that all valuations are strictly positive, therefore all items must be fully allocated (otherwise, $y$ is not Pareto optimal). However, it is possible that some agents receive an empty allocation.

%\begin{definition}
%    An allocation $y$ is \em{zero-free} if for any agent $i$ and item $j$ for which $v_{ij}=0$ we have $y_{ij}=0$.
%\end{definition}

%\begin{lemma}
%    \label{lem:zero-valuation}
%    Given an Pareto optimal allocation $y$, a zero-free and Pareto optimal allocation $x$ that preserves agents' utilities in $x$ can be efficiently computed.
%\end{lemma}

%\begin{proof}
%    To construct $x'$, simply set $x'_{ij} = x_{ij}$ if $v_{ij} > 0$ and $x'_{ij} = 0$ otherwise. Since $x'$ differs from $x$ only in fractions of items that were allocated to agents who did not want them, the utilities are unchanged. Therefore, since $x$ is Pareto optimal, so must be $x'$.
%\end{proof}

%We note that if a Pareto optimal allocation is not zero free, any item that is allocated (fully or fractionally) to an agent that gains zero utility from it must have a zero valuation from all agents, otherwise the allocation is not Pareto optimal, as reallocating the item to some agent with a non zero valuation creates a superior allocation.

We build a bipartite graph $G(y)$, where vertices on the right represent agents and vertices on the left represent items.
%If $x_{i,j}>0$ we add an edge between agent $i$ and item $j$ with weight $x_{i,j}$.
If $y_{i,j}>0$ we add an edge between agent $i$ and item $j$ with weight $y_{i,j}$.

We first give a graph theoretic framework for our problem.
Consider the following definition:

\begin{definition}\label{def:cycle-free}
	An allocation $x$ is {\em cycle-free} if the graph $G(x)$ is a forest.
\end{definition}

\begin{proposition}
	\label{prop:cycle-free-pareto}
	Given a Pareto optimal allocation $y$, there exists a cycle-free Pareto allocation $x$ that preserves the same agent utilities as in $y$. 
	%and can be computed in polynomial time.
	Moreover, the allocation $x$ can be computed in polynomial time.
\end{proposition}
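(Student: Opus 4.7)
My plan is iterative: starting from $x := y$, repeatedly locate a cycle in $G(x)$, push a utility-preserving flow around it, and scale so that at least one cycle edge drops to weight $0$. Since $G(x)$ has at most $nm$ edges and each iteration strictly decreases the edge count, the process terminates in polynomial time with an allocation whose graph is a forest and whose utility vector agrees with that of $y$.

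For a single iteration, fix a cycle $i_1 - j_1 - i_2 - j_2 - \cdots - i_k - j_k - i_1$ of $G(x)$; it has even length because the graph is bipartite. For scalars $a_1, \ldots, a_k$, consider the perturbation that adds $a_\ell$ to $x_{i_\ell, j_\ell}$ and subtracts $a_\ell$ from $x_{i_{\ell+1}, j_\ell}$ (indices mod $k$); the total allocation of each cycle item is preserved. Agent $i_\ell$'s utility changes by $v_{i_\ell, j_\ell}\, a_\ell - v_{i_\ell, j_{\ell-1}}\, a_{\ell-1}$, so requiring this change to vanish for every cycle agent yields the recurrence $a_\ell = (v_{i_\ell, j_{\ell-1}}/v_{i_\ell, j_\ell}) \cdot a_{\ell-1}$. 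A nonzero cyclic solution exists iff
$$R := \prod_{\ell=1}^k \frac{v_{i_\ell, j_{\ell-1}}}{v_{i_\ell, j_\ell}} = 1.$$

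The crux will be to invoke Pareto optimality to conclude $R = 1$. If instead $R \neq 1$, I would enforce the utility-preservation equations only at $i_2, \ldots, i_k$; a direct telescoping then gives that agent $i_1$'s utility changes by $v_{i_1, j_1}(1-R)\,a_1$, which is strictly positive for an appropriate sign of $a_1$, while all other agents are unaffected. Every cycle edge has strictly positive weight in $x$ (this is how $G(x)$ is defined, and the property is preserved across iterations since we only remove edges when their weight hits $0$), so taking $|a_1|$ small enough keeps the new allocation feasible while strictly Pareto-dominating $x$. Since $x$ has the same utilities as the Pareto optimal $y$, this contradicts Pareto optimality of $y$, forcing $R = 1$.

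With $R = 1$ in hand, the recurrence admits a nonzero solution; I would scale it by the largest factor keeping all perturbed allocations nonnegative, which forces at least one cycle edge to reach weight $0$. Delete that edge from $G(x)$ and iterate. Each step---locating a cycle (by BFS), solving a length-$k$ linear recurrence, and computing the scaling factor---is polynomial, and at most $nm$ steps suffice, giving the polynomial-time bound. The only nontrivial ingredient is the Pareto-optimality argument that pins down $R = 1$; everything else is straightforward bookkeeping.
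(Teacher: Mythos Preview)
Your proposal is correct and follows essentially the same approach as the paper: iteratively find a cycle in $G(x)$, push a flow around it that keeps all but one agent indifferent, invoke Pareto optimality of $y$ to conclude the remaining agent is also indifferent, and scale so an edge drops out. Your explicit computation of the cycle product $R$ and the argument that $R=1$ is just a cleaner algebraic packaging of the paper's ``if $i_1$ is made worse off, reverse the direction'' step; the underlying idea and the termination/complexity analysis are identical.
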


\begin{proof}
	%Wlog, assume $x$ is zero-free, updating the allocation as described in Lemma \ref{lem:zero-valuation}. 
	Assume that $y$ is not cycle-free; We construct a forest graph $G(x)$ associated with another Pareto optimal allocation $x$, which is cycle-free.
	Let $C=i_1 \rightarrow j_1 \rightarrow i_2 \rightarrow j_2 \rightarrow \cdots \rightarrow j_k \rightarrow i_1$ be a simple cycle in $G(y)$, where $i_{\ell}$'s and $j_{\ell}$'s are agents and items, respectively.
	Each edge in the cycle $C$ represents an allocation $0 < y_{i,j} < 1$. $y_{i,j} > 0$ by definition of the graph construction, and $y_{i,j} < 1$ since the corresponding edge is part of a cycle, and therefore item $j$ cannot be allocated only to agent $i$.
	
	We define a new allocation $x$ as follows.
	For every edge $(i,j)$ not on the cycle $C$, $x_{i,j}=y_{i,j}$.
	Set $x_{i_1,j_1}=y_{i_1,j_1}+\epsilon_1$ (for a sufficiently small $\epsilon_1\geq0$, to be determined later).
	Set $x_{i_2,j_1}=y_{i_2,j_1}-\epsilon_1$.
	Next, set $\epsilon_2 = \left({\epsilon_1 v_{i_2,j_1}}\right)/\left({v_{i_2,j_2}}\right)$, and let $x_{i_2,j_2}=y_{i_2,j_2}+\epsilon_2$.
	One can easily verify that $\epsilon_2$ is such that agent $i_2$ is now indifferent between $y$ and $x$. In addition, since $v_{i_2,j_2} > 0$ the value of $\epsilon_2$ is well defined.
	We continue in this manner along the cycle, until reaching agent $i_1$, where $x_{i_1,j_k}=y_{i_1,j_k}-\epsilon_k$.
	
	All agents along the cycle are indifferent between $y$ and $x$ by construction, except, possibly, for agent $i_1$.
	If agent $i_1$ is made worse off by this process, then reverse the transitions along the cycle to make agent $i_1$ better off, without hurting any other agent, contradicting Pareto optimality.
	
	One can verify that there exists $\epsilon_1$ such that $x_{i,j}\geq 0$ for all $i,j$, and there exists an agent $i_{\ell}$ such that $x_{i_{\ell},j_{\ell-1}}=0$  (for $\ell=1$, $\ell-1=k$). Therefore, the cycle $C$ has been removed. The desired value of $\epsilon_1$ can be efficiently computed by testing all agents on the cycle as candidates for $\ell$.
	Continue in this manner, resetting $y=x$, until all cycles have been removed.
	
	Throughout the process, no agent is made worse off, therefore Pareto optimality is preserved throughout the process, and the corresponding graph of the obtained allocation is a forest, thus $x$ is cycle-free. Moreover, utilities are unchanged as desired.
	The process of creating a cycle-free allocation completes in polynomial time, since each iteration searches for a cycle in the underlying graph and then removes an edge from it. Therefore, the number of iterations is bounded by the number of edges is the graph.
\end{proof}

%Let the resulting graph be $F=(V,E).$

\subsection{Pricing a single tree}
\label{sec:price-tree}

%Consider one of the trees $T$ in the forest $F$ above.
Now, there are no cycles in $G(x)$ and we have a forest in hand.
For each tree separately, we price each item, and endow each agent with some \budget, such that no agent wishes to deviate to another affordable allocation within the tree.

\begin{definition}
	Given a set of items $I$ with item prices $\{p_j\}_{j \in I}$, the {\emph demand correspondence of agent $i$ with budget $b_i$}, is the set of all affordable fractional allocations of items in $I$ that maximize $u_i(x_i, p, b_i)$.
\end{definition}

Given a tree $T$ in $G(x)$, we focus on the case that it contains vertices representing at least one item and one agent\footnote{Other cases are uninteresting. If a tree contains two or more vertices then it must contain both an item and an agent. If a tree contains only a single vertex, then if this single vertex is an agent then this agent has an empty allocation --- in this case we set the agent's budget to zero. The single vertex cannot be an item,  since all item valuations are strictly greater than zero, hence, any Pareto optimal allocation must allocate all items.}. We price the items in the tree $T$ successively item by item as follows: 
\begin{enumerate}
	\item If no item has been priced yet, choose an arbitrary agent $i \in T$. For all adjacent items $j$ set the price of item $j$, $p_j=v_{i,j}$.
	\item Repeat the following until all items in $T$ have been priced:
	
	Choose some arbitrary agent $i'\in T$ that is adjacent to some item already priced, say item $j'$, but also adjacent to at least one unpriced item. As $T$ is a tree, and by the construction of the pricing mechanism, $i'$ is adjacent to at most one item that has been priced.
	Agent $i'$ should be indifferent between all items agent $i'$ is adjacent to, so the price (in tokens) for any two items, $k$ and $k'$, both adjacent to agent $i'$ should satisfy
	\begin{equation}
		p_k/p_{k'} = v_{i',k}/v_{i',k'}. \label{eq:indifference} \end{equation}
	
		Ergo, to ensure that agent $i'$ is indifferent to spending tokens amongst neighboring items, for every item $k \neq j'$ such that $x_{i',k} > 0$, set the price for item $k$: $$p_k = v_{i',k} \cdot p_{j'} / v_{i',j'}.$$

	\item Set the budget for every agent $i$ so that she can afford exactly the appropriate fractions of her adjacent items, i.e., $b_i = \sum_{j}x_{i,j}p_j$.
\end{enumerate}

We call a tree where all items have been priced a ``{\sl priced tree}''.

The following lemma considers what happens when an agent changes the allocation of tokens between adjacent items in a Pareto optimal allocation.

\begin{lemma}
	% we need $0 < x_{i,j'}$ because only for such items we have the equal ratios.
	Given a priced tree $T$ in $G(x)$, consider an agent $i$ and items $j,j'$ such that $0<x_{i,j}<1$, $0<x_{i,j'}\leq 1$.
	Let $$x'_{i,j}=x_{i,j}+\epsilon{p_{j'}}/{p_{j}}, \mbox{\rm\ and\ }x'_{i,j'}=x_{i,j'}-\epsilon, \mbox{\rm for  a sufficiently small $\epsilon$}.$$
	Then, $u_i(x)=u_i(x')$, and the payment of agent $i$ for allocations $x$ and $x'$ is identical.
	\label{lem:internal}
\end{lemma}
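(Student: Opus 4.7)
The plan is to verify both claims by direct substitution, relying on the critical indifference property that the pricing construction of Section~\ref{sec:price-tree} enforces at every agent in a priced tree. Specifically, the key fact I would extract from step 2 of that construction is: whenever two items $j, j'$ are both adjacent to agent $i$ in $T$ (i.e.\ $x_{i,j} > 0$ and $x_{i,j'} > 0$), their prices satisfy
\[
\frac{p_j}{v_{i,j}} = \frac{p_{j'}}{v_{i,j'}},
\]
equivalently, $v_{i,j}\, p_{j'} = v_{i,j'}\, p_j$. This holds because the construction, when it processes agent $i$ (either as the root or at the unique already-priced item adjacent to $i$), sets the price of every other neighboring item proportionally to its value to agent $i$, matching equation~(\ref{eq:indifference}).

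Given this, the first step is to argue that $\epsilon$ can be chosen strictly positive so that $x'$ is a valid allocation for agent $i$: we need $x'_{i,j} \le 1$ and $x'_{i,j'} \ge 0$. The assumptions $x_{i,j} < 1$ and $x_{i,j'} > 0$ guarantee that any $\epsilon \in \bigl(0,\, \min\{(1 - x_{i,j})\, p_j/p_{j'},\; x_{i,j'}\}\bigr]$ works. The other coordinates of $x_i$ are unchanged, so the only thing to check for feasibility is this pair of items.

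Next, I would compute the change in payment of agent $i$, which involves only items $j$ and $j'$:
\[
\bigl(x'_{i,j} - x_{i,j}\bigr)\, p_j + \bigl(x'_{i,j'} - x_{i,j'}\bigr)\, p_{j'}
= \epsilon \, \frac{p_{j'}}{p_j} \cdot p_j \;-\; \epsilon \cdot p_{j'} \;=\; 0,
\]
so the payment is identical under $x$ and $x'$. Then I would compute the change in utility:
\[
\bigl(x'_{i,j} - x_{i,j}\bigr)\, v_{i,j} + \bigl(x'_{i,j'} - x_{i,j'}\bigr)\, v_{i,j'}
= \epsilon \cdot \frac{p_{j'}\, v_{i,j}}{p_j} \;-\; \epsilon\, v_{i,j'},
\]
which equals $0$ precisely because of the indifference identity $v_{i,j}\, p_{j'} = v_{i,j'}\, p_j$ above. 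Hence $u_i(x) = u_i(x')$.

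There is no real obstacle here: the only care needed is the elementary choice of $\epsilon$ and invoking the right property of the tree-pricing rule. The lemma is essentially a local check that the pricing of the tree was designed to make each agent indifferent along any pair of her neighboring items, phrased in terms of feasible token-preserving perturbations of her allocation.
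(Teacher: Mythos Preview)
Your proof is correct and follows essentially the same approach as the paper: choose $\epsilon \le \min\{x_{i,j'},\,(1-x_{i,j})p_j/p_{j'}\}$, compute the change in payment and the change in utility directly, and conclude both vanish by invoking the indifference relation $v_{i,j}/p_j = v_{i,j'}/p_{j'}$ from equation~\eqref{eq:indifference}. The paper's proof is just a terser version of what you wrote.
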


\begin{proof}
	Let $\epsilon \leq \min\left\{x_{i,j'},(1-x_{i,j}){p_{j}}/{p_{j'}}\right\}$.
	The payment of agent $i$ changes by $\epsilon (p_{j'}/p_j) \cdot p_{j} - \epsilon p_{j'} = 0$.
	The utility of agent $i$ changes by $\epsilon (p_{j'}/{p_j}) \cdot v_{i,j} - \epsilon v_{i,j'} = 0$,
	where the last equation follows by ${v_{i,j}}/{p_j} = {v_{i,j'}}/{p_{j'}}$, which holds due to Equation \eqref{eq:indifference}.
\end{proof}

The following lemma establishes a sufficient requirement that makes deviating from the proposed allocation beneficial to an agent.

\begin{lemma}
	Given a priced tree $T$ in $G(x)$, consider an agent $i$ and items $j,j'$ such that $x_{i,j}>0$, $x_{i,j'}=0$.
	Let $$x'_{i,j}=x_{i,j}-\epsilon{p_{j'}}/{p_j}, \mbox{\ and\ } x'_{i,j'}=\epsilon, \mbox{\rm\ for a sufficiently small $\epsilon$}.$$
	If ${v_{i,j'}}/{p_{j'}}>{v_{i,j}}/{p_{j}}$, then $u_i(x_i, p, b_i) < u_i(x'_i, p, b_i),$ and the payment of agent $i$ for $x$ and $x'$ is identical.
	\label{lem:deviator}
\end{lemma}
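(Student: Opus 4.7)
The plan is to prove this lemma by direct computation, since the statement is essentially a one-step local deviation analysis. The key observation is that the substitution is designed (via the factor $p_{j'}/p_j$) to hold expenditure constant, so the only thing to verify is that this expenditure-preserving swap strictly increases utility whenever the ``bang-per-buck'' of $j'$ exceeds that of $j$.

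First I would nail down the range of $\epsilon$ for which $x'$ is a feasible individual allocation: one needs $x'_{i,j}=x_{i,j}-\epsilon p_{j'}/p_j\ge 0$ and $x'_{i,j'}=\epsilon\le 1$, so any $\epsilon\le\min\{x_{i,j}p_j/p_{j'},\,1\}$ works; since $x_{i,j}>0$ and $p_j,p_{j'}>0$, such positive $\epsilon$ exist. All other coordinates of $x_i$ are unchanged, so total expenditure and total utility differ from those at $x$ only through the two modified coordinates.

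Next I would verify the payment claim by direct calculation:
\[
\sum_{k} x'_{i,k} p_k - \sum_{k} x_{i,k} p_k \;=\; -\epsilon\,\frac{p_{j'}}{p_j}\cdot p_j \;+\; \epsilon\cdot p_{j'} \;=\; 0.
\]
In particular $x'$ is affordable under $b_i$ (since $x$ is), so $u_i(x'_i,p,b_i)=\sum_k x'_{i,k} v_{i,k}$ and we are not in the $-1$ branch of Definition~\ref{def:utility}.

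The utility difference is then
\[
u_i(x'_i,p,b_i)-u_i(x_i,p,b_i) \;=\; -\epsilon\,\frac{p_{j'}}{p_j}\,v_{i,j} \;+\; \epsilon\, v_{i,j'} \;=\; \epsilon\, p_{j'}\!\left(\frac{v_{i,j'}}{p_{j'}}-\frac{v_{i,j}}{p_{j}}\right).
\]
Since $\epsilon>0$, $p_{j'}>0$, and by hypothesis $v_{i,j'}/p_{j'}>v_{i,j}/p_j$, this difference is strictly positive, giving $u_i(x_i,p,b_i)<u_i(x'_i,p,b_i)$ as required.

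There is no serious obstacle here; the only thing to be careful about is to choose $\epsilon$ small enough that both $x'_{i,j}\ge 0$ and $x'_{i,j'}\le 1$ so that the new tuple is a valid allocation for agent $i$ and the utility function is evaluated on its ``non-$-1$'' branch. Note that unlike Lemma~\ref{lem:internal}, we do \emph{not} invoke Equation~\eqref{eq:indifference} here (indeed, $j'$ is not adjacent to $i$ in $G(x)$, so no such indifference holds); the hypothesis $v_{i,j'}/p_{j'}>v_{i,j}/p_j$ replaces it and is precisely what forces strict improvement.
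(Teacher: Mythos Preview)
Your proof is correct and follows essentially the same approach as the paper: choose $\epsilon \le \min\{1,\,x_{i,j}p_j/p_{j'}\}$, verify the payment change is $\epsilon p_{j'} - \epsilon (p_{j'}/p_j)\cdot p_j = 0$, and conclude the utility change $\epsilon v_{i,j'} - \epsilon (p_{j'}/p_j)\,v_{i,j}>0$ from the bang-per-buck hypothesis. Your write-up is slightly more detailed about feasibility and the non-$(-1)$ branch of Definition~\ref{def:utility}, but the argument is identical.
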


\begin{proof}
	Let $\epsilon \leq \min\left\{1,x_{i,j}{p_{j}}/{p_{j'}}\right\}$.
	The payment of agent $i$ changes by $\epsilon p_{j'} - \epsilon {p_{j'}}/{p_{j}} \cdot p_{j} = 0$.
	The utility of agent $i$ changes by $\epsilon v_{i,j'} -\epsilon {p_{j'}}/{p_j} \cdot v_{i,j} > 0$,
	where the inequality follows by our assumption that $\frac{v_{i,j'}}{p_{j'}}>\frac{v_{i,j}}{p_{j}}$.
\end{proof}

We now show that no agent seeks to reallocate tokens from an adjacent item to a non-adjacent item in the same tree. 

\begin{lemma}
	\label{lem:devtree}
	Given a priced tree $T$ in $G(x)$, let $i$ be an arbitrary agent in $T$ and let $j,j'$ be two items in $T$ such that $x_{i,j}>0$ and $x_{i,j'}=0$. 
	Then,
	$$
	\frac{v_{i,j}}{p_{j}} \geq \frac{v_{i,j'}}{p_{j'}}.
	$$
\end{lemma}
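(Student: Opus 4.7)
The plan is to argue by contradiction: suppose there exist $i$, $j$, $j'$ in tree $T$ with $x_{i,j}>0$, $x_{i,j'}=0$, and $v_{i,j'}/p_{j'} > v_{i,j}/p_j$. I will then construct a Pareto-improving reallocation along the unique path in $T$ from $i$ to $j'$, contradicting Pareto optimality of $x$ (established in \Cref{prop:cycle-free-pareto}).

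Because $T$ is a bipartite tree, the unique path from $i$ to $j'$ has the form $i = i_0,\, j_1,\, i_1,\, j_2,\, i_2,\, \ldots,\, j_s = j'$, alternating agents and items. Since $i$ is not adjacent to $j'$, we have $s \geq 2$, so at least one intermediate agent exists. Every edge along the path has strictly positive allocation, so small local transfers along the path are feasible.

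I would then perform a cyclic trade: agent $i$ hands $\delta_0 > 0$ of item $j_1$ to $i_1$; each intermediate $i_\ell$ (for $\ell=1,\ldots,s-1$) hands $\delta_\ell$ of $j_{\ell+1}$ onward; and finally $i_{s-1}$ hands $\delta_{s-1}$ of $j_s = j'$ back to $i$. Choose the $\delta_\ell$ to keep each intermediate agent indifferent: since $i_\ell$ gains $\delta_{\ell-1}$ of $j_\ell$ and loses $\delta_\ell$ of $j_{\ell+1}$, indifference demands $\delta_{\ell-1} v_{i_\ell, j_\ell} = \delta_\ell v_{i_\ell, j_{\ell+1}}$. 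Telescoping and substituting the pricing indifference relation \eqref{eq:indifference} at each $i_\ell$ gives
\[
\frac{\delta_{s-1}}{\delta_0} \;=\; \prod_{\ell=1}^{s-1} \frac{v_{i_\ell, j_\ell}}{v_{i_\ell, j_{\ell+1}}} \;=\; \prod_{\ell=1}^{s-1} \frac{p_{j_\ell}}{p_{j_{\ell+1}}} \;=\; \frac{p_{j_1}}{p_{j'}}.
\]

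The net utility change for agent $i$ is $\delta_{s-1}\,v_{i,j'} - \delta_0\,v_{i,j_1} = \delta_0\bigl[v_{i,j'}\,p_{j_1}/p_{j'} - v_{i,j_1}\bigr]$. Since both $j$ and $j_1$ are adjacent to $i$ in $T$, the pricing construction yields $v_{i,j_1}/p_{j_1} = v_{i,j}/p_j$, so this expression is positive exactly when $v_{i,j'}/p_{j'} > v_{i,j}/p_j$, our standing hypothesis. Choosing $\delta_0$ sufficiently small keeps every modified allocation in $[0,1]$, so we obtain a genuine Pareto improvement—$i$ is strictly better off, every other agent is indifferent—contradicting Pareto optimality of $x$. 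The only delicate step is verifying the telescoping identity via \eqref{eq:indifference} that converts the product of value ratios along the path into the price ratio $p_{j_1}/p_{j'}$; once that is in hand, the rest is elementary bookkeeping.
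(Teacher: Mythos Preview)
Your proof is correct and follows essentially the same route as the paper: argue by contradiction, take the unique path in $T$ from $i$ to $j'$, perform a cyclic trade that keeps every intermediate agent indifferent via \eqref{eq:indifference}, and show that agent $i$ strictly gains, contradicting Pareto optimality. The only cosmetic difference is that the paper parametrizes the transfers directly by price ratios and invokes Lemmata~\ref{lem:internal} and~\ref{lem:deviator}, whereas you parametrize by value ratios and then telescope to the price ratio $p_{j_1}/p_{j'}$; the two presentations are equivalent.
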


\begin{proof}
	Assume for contradiction that there exist $i,j,j'$ such that $\frac{v_{i,j}}{p_{j}} < \frac{v_{i,j'}}{p_{j'}}$.
	Then, consider the cycle obtained by the path connecting $i$ to $j'$ in $T(x)$ and the additional (undirected) edge from $j'$ to $i$. Note that $j$ is not necessarily in the cycle. However, the item adjacent to $i$ in the cycle was priced relatively to $j$.
	We construct a new allocation $x'$ such that agent $i$ strictly improves her utility and all other agents along the path are indifferent, contradicting the Pareto optimality of $x$.
	
	Fix some sufficiently small $\epsilon$.
	Rename the agents along the cycle $1, \ldots, k$, where $i=k$.
	Similarly, rename the items along the cycle so that item $j_\ell$ is shared by agents $\ell$ and $\ell+1$ along the cycle for $\ell=1,\ldots,k-1$ (and item $j_k=j_0$ is shared between agent $k$ and $1$).
	
	Let $x'_{1,k} = x_{1,k}-\epsilon$.
	For every $\ell=1, \ldots, k-1$, let \begin{eqnarray*}x'_{\ell,j_{\ell-1}} &=& x_{\ell,j_{\ell-1}} - \epsilon\frac{p_{k}}{p_{\ell-1}}\\ x'_{\ell,j_{\ell}} &=& x_{\ell,j_{\ell}} + \epsilon  \frac{p_{k}}{p_\ell}.\end{eqnarray*} Also, let \begin{eqnarray*} x'_{k,j_{k-1}} &=& x_{k,j_{k-1}} - \epsilon \frac{p_{k}}{p_{k-1}} \\ x'_{k,j_{k}} &=& \epsilon.\end{eqnarray*}
	
	We claim that agents $\ell=1,\ldots,k-1$ are indifferent, by Lemma~\ref{lem:internal}. This is clear for agent $1$, and for agents $\ell=2, \ldots, k-1$, note that $\epsilon \frac{p_{k}}{p_{\ell-1}} \cdot \frac{p_{\ell-1}}{p_{\ell}} = \epsilon\frac{p_{k}}{p_{\ell}}$.
	
	Finally, agent $k$ is better off by Lemma~\ref{lem:deviator}, which contradicts the Pareto optimality of $x$.
\end{proof}

%\begin{proposition}
%   For every tree $T$, the allocation $x*_{i}$ of agent $i$ is in $i$'s demand correspondence with respect to items in $T$.
%\label{prop:single-tree}
%\end{proposition}

\subsection{Pricing the Complete Allocation (a Forest)}
\label{sec:price-forest}

We seek anonymous prices for all items, guaranteeing that no agent seeks to deviate. We know that no agent within some tree $T$ seeks to change her allocation so as to gain some other item, not allocated to her, within the tree $T$.

We now need to deal with possible deviations across trees, {\sl i.e.}, some agent in tree $T$ wishes to use some of her tokens so as to get an item that resides in tree $T'\neq T$. We show that there exist scalar multipliers, for every tree $T$, such that multiplying the prices and budgets computed above by this scalar value suffices to ensure that there are no profitable deviations across trees. Such multipliers are called {\sl good multipliers}.

We remark that $b_i$ and $p_j$ represent budgets and prices as computed individually tree by tree. 

We consider scaling the prices and budgets of each tree $T_i$ by a
constant $\alpha_i>0$. The scaling is normalized by assuming that
$\sum_i \alpha_i=1$. Let $\Delta$ be the simplex, i.e.,
$\Delta=\left\{\alpha\in \mathbb{R}^k: \alpha_i\geq 0, \sum_{i=1}^k \alpha_i=1\right\}$.

Note that the scaling does not change the incentives of the agents within a particular tree, but it may change the incentives between tress. Our goal is to find scalar multipliers such that no agent in some tree $T$ will seek to deviate and get an item from another tree $T'$. 

If the forest $G(x)$ contains degenerate single vertex trees representing agents receiving an empty allocation, we omit them from the scaling process. Recall that such an agent has a budget of zero, and therefore can only purchase items if they are free. The pricing scheme of each tree separately does not produce free items, but this may occur if we choose to scale the prices by a factor of zero. Later on, we will observe that this undesired outcome does not happen.

\subsubsection{Proving that good multipliers exist.}
\label{sec:fixed-existence}

\begin{proposition}
	\label{prop:alphas}
	There exist scaler multipliers $\alpha_T$, for every tree $T$, such that scaling all prices and budgets from the tree by $\alpha_T$ results in that either:
	(1) no agent can gain by deviating, or (2) for each tree $T_i$ there is some other tree $T_j$ such that there is an agent in tree $T_j$ that benefits by deviating and buying an item in tree $T_i$.
\end{proposition}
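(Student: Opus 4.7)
The plan is to invoke Brouwer's fixed-point theorem on a carefully chosen continuous self-map $\phi$ of the simplex $\Delta$, whose fixed points correspond to multiplier vectors satisfying either condition~(1) or condition~(2). The key design choice is to measure, for each tree $T_i$, the total ``excess pressure'' to deviate into $T_i$ from agents residing in other trees, in a form that remains continuous all the way to the boundary of $\Delta$.

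First I will exploit the within-tree pricing: by the construction of Section~\ref{sec:price-tree} together with \Cref{lem:devtree}, the ratio $r_a := v_{a,j'}/p_{j'}$ is the same for every item $j'$ in the current bundle of any non-degenerate agent $a$; set $r_a = 0$ for the degenerate agents excluded from scaling. A short computation then shows that after scaling by $\alpha$, an agent $a \in T_{j}$ strictly benefits from shifting tokens onto an item $b \in T_i$ (with $i \ne j$) if and only if $\alpha_j v_{a,b} > \alpha_i p_b r_a$. Accordingly I define
\[
D_i(\alpha) \;:=\; \sum_{\substack{b \in T_i \\ a \notin T_i}} \bigl(\alpha_{j(a)} v_{a,b} - \alpha_i p_b r_a\bigr)^+,
\]
so that condition~(1) is equivalent to $D_i(\alpha) = 0$ for all $i$, while condition~(2) is equivalent to $D_i(\alpha) > 0$ for all $i$. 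Using this \emph{product} form, rather than the more obvious bang-per-buck ratios $v_{a,b}/(\alpha_i p_b)$, is the crucial technical move: it avoids any division by $\alpha_i$ and keeps $D_i$ continuous on all of $\Delta$.

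Next I will set
\[
\phi_i(\alpha) \;:=\; \frac{\alpha_i + D_i(\alpha)}{1 + \sum_k D_k(\alpha)}.
\]
Since each $D_k \ge 0$, the denominator is at least $1$, so $\phi$ is continuous on $\Delta$ and maps $\Delta$ to itself. Brouwer then yields a fixed point $\alpha^* \in \Delta$, and $\phi(\alpha^*) = \alpha^*$ rearranges to $\alpha^*_i \cdot D = D_i(\alpha^*)$ for every $i$, where $D := \sum_k D_k(\alpha^*)$. If $D = 0$ we are immediately in case~(1). Otherwise $D > 0$, and the one remaining step---the main obstacle---is to rule out $\alpha^*_i = 0$. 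If this held for some $i$, then since $\sum_k \alpha^*_k = 1$ some other tree $T_j$ would have $\alpha^*_j > 0$, and picking any agent $a \in T_j$ and any item $b \in T_i$ (both exist because every non-degenerate tree contains at least one agent and at least one item, and all valuations are strictly positive) produces a positive contribution $\alpha^*_j v_{a,b} > 0$ to $D_i(\alpha^*)$; but the fixed-point identity forces $D_i(\alpha^*) = \alpha^*_i \cdot D = 0$, a contradiction. Hence $\alpha^*_i > 0$ for every $i$, so $D_i(\alpha^*) = \alpha^*_i D > 0$ for every $i$, placing us in case~(2).
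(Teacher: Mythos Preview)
Your proof is correct and follows the same overall architecture as the paper's: both apply Brouwer's theorem to a Nash--style map $\alpha \mapsto (\alpha + \text{pressure})/(1+\text{total pressure})$ on the simplex, then use the fixed-point identity $\alpha^*_i \cdot D = D_i(\alpha^*)$ to obtain the dichotomy, and rule out $\alpha^*_i=0$ via the observation that a zero-priced tree attracts strictly positive deviation pressure from any tree with $\alpha^*>0$.

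The genuine difference is in how the ``pressure'' is measured. The paper defines $GAIN_{i,j}$ as the actual (capped) utility improvement, which involves ratios such as $\alpha_{T(j)}p_j/(\alpha_{T(i)}b_i)$; the individual $GAIN_{i,j}$ is \emph{not} continuous on $\Delta$, and the paper must argue separately (\Cref{cl:continuous}) that the aggregate $GAIN_j=\max_i GAIN_{i,j}$ is nevertheless continuous, relying on the cap $v_{\min}$ and a case analysis at the boundary. Your product-form indicator $(\alpha_{j(a)} v_{a,b} - \alpha_i p_b r_a)^+$ sidesteps this entirely: $D_i$ is a finite sum of positive parts of linear functions of $\alpha$, hence trivially continuous (indeed Lipschitz) on all of $\Delta$, with no capping or boundary analysis needed. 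The trade-off is that the paper's $GAIN_T$ has a direct utilitarian interpretation, whereas your $D_i$ is a proxy that merely shares the same zero set; for the purposes of this proposition only the zero set matters, so your choice is the cleaner one.

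Two small points. First, the constancy of $r_a$ over agent $a$'s bundle comes from the indifference pricing \eqref{eq:indifference}, not from \Cref{lem:devtree} (which concerns items \emph{outside} the bundle). Second, your sentence ``If $D=0$ we are immediately in case~(1)'' tacitly uses $\alpha^*_i>0$ for all $i$ so that the product-form inequality is equivalent to the bang-per-buck inequality; but the very argument you give for the $D>0$ case (namely, $\alpha^*_i=0$ forces $D_i(\alpha^*)>0$ while the fixed-point identity gives $D_i(\alpha^*)=\alpha^*_i D=0$) applies verbatim when $D=0$ as well, so this is only an expository gap.
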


We will later show that if condition (2) holds, then this contradicts the Pareto optimality of the initial allocation $y$ (and $x$).

We use Brouwer's fixed-point theorem to prove the proposition.
Fix an agent $i$ and an item $j$.
Let $v_{i,j}$ denote agent $i$'s valuation for item $j$.
Let $p_j$ denote the price of item $j$, and $b_i$ denote the budget of agent $i$.
Finally, let $u_i=\sum_j x_{ij} v_{ij}>0$ be the utility of agent $i$, given the current allocation $\{x_{ij}\}$.

Let $T(i)$ be the tree that has agent $i$, and $T(j)$ be the tree that has item
$j$. Thus, prices in $T(i)$ are multiplied by $\alpha_{T(i)}$ ($\widehat{p}_i \triangleq \alpha_{T(i)} p_i$) and prices in $T(j)$ are multiplied by $\alpha_{T(j)}$ ($\widehat{p}_j \triangleq \alpha_{T(j)} p_j$).  Let $v_{\min}=0.5 \min_{i,j | v_{i,j}>0} v_{i,j}$. \footnote{The constant $0.5$ can be replaced by any constant $c \in (0,1)$; it is used to ensure that $GAIN_{T}$ is a continuous function.}

We define a variable $GAIN_{i,j}$ defined via Algorithm \ref{alg:vertex-arrival}, where $i$ is an agent and $j$ is an item, in different trees. $GAIN_{i,j}$ represents the profit that agent $i$ can obtain by shifting tokens from the current allocation to item $j$, capped by some small value. Our eventual goal is to prove that there exist multipliers such that $GAIN_{i,j}=0$ for all $i$, $j$, which implies that no agent wishes to purchase an item  from another tree. 

We know that an agent is indifferent between all items for which she has a positive allocation. To release one token (to be used to purchase some of item $j$) agent $i$ has to give up utility $u_i/(\alpha_{T(i)} b_i)$ (recall that $u_i$ is the current utility to agent $i$).

In the computation of $GAIN_{i,j}$ we consider three cases: (a) agent $i$ can afford to purchase all of item $j$ (lines \ref{line:ca1}, \ref{line:ca2} in the algorithm), (b) agent $i$ gets item $j$ for free (lines \ref{line:cb1}, \ref{line:cb2} in the algorithm) and (c) agent $i$ can afford to purchase only part of item $j$ (line \ref{line:cc1} in the algorithm).

If indeed agent $i$ will have positive profit,  then in case (a) all the item will be purchased and she will give up some of the current allocation, in case (b) she will simply add the item to her allocation for free, and in case (c) the agent will give up her entire current allocation and add as much of item $j$ as possible.

\begin{algorithm}
	\caption{Calculating $GAIN_{i,j}$}
	\label{alg:vertex-arrival}
	\begin{algorithmic}[1]
		\STATE {\bf Inputs:} {Agent $i\in A$, item $j\in I$, that reside in distinct trees: $T(i)\neq T(j)$.}
		\STATE {\bf Output:} {$GAIN_{i,j}\in [0,v_{\min}]$, where $v_{\min}=0.5 \min_{i,j} v_{i,j}$}
		\IF{$\alpha_{T(j)} p_j < \alpha_{T(i)}b_i$} \label{line:ca1}
		\STATE $GAIN_{i,j} = \max\left(v_{i,j}-\alpha_{T(j)}p_j \frac{u_i}{\alpha_{T(i)}b_i},0\right)$ \label{line:ca2} 
		\ELSE %\tcp{agent $i$ buys $\frac{\alpha_{1}b_i}{\alpha_{2}p_j}\leq 1$}
		\IF{$\alpha_{T(j)}=0$} \label{line:cb1}
		\STATE $GAIN_{i,j}=v_{i,j}$ \label{line:cb2}
		\ELSE
		\STATE \label{line:cc1} $GAIN_{i,j}=\max\left(v_{i,j}\frac{\alpha_{T(i)}b_i}{\alpha_{T(j)}p_j}-u_i,0\right)$
		\ENDIF
		\ENDIF
		\STATE \Return $GAIN_{i,j}=\min\left(GAIN_{i,j},v_{\min}\right)$
	\end{algorithmic}
\end{algorithm}

%%\begin{tabular}
% IF $ \alpha_{j'} p_j < \alpha_{i'}b_i$ THEN\\
% (if agent $i$ were to profit by taking any of item $j$ then she would profit more by taking all of item $j$)\\
%$GAIN_{i,j} = \max(v_{i,j}-\alpha_{j'}p_j
%\frac{u_i}{\alpha_{i'}b_i},0)$\\
%ELSE\\
%(agent $i$ buys $\frac{\alpha_{i'}b_i}{\alpha_{j'}p_j}\leq 1$)\\
%IF $\alpha_{j'}=0$ THEN $GAIN_{i,j}=v_{i,j}$\\
%ELSE
%$GAIN_{i,j}=\max(v_{i,j}\frac{\alpha_{i'}b_i}{\alpha_{j'}p_j}-u_i,0)$\\
%ENDIF\\
%$GAIN_{i,j}=\min(GAIN_{i,j},v_{\min})$.\\
%%\end{tabular}

Given variables $GAIN_{i,j}$ for every agent $i\in A$ and item $j\in I$, let $GAIN_j=\max_i GAIN_{i,j}$, and for every tree $T$, let $$GAIN_T=\max_{\mbox{\rm items\ }j\in T} GAIN_j.$$ 

Ergo, $GAIN_{i,j}$ denotes the utility agent $i$ can gain by shifting her [adjusted] budget toward item $j$, but this is capped by some arbitrary (small) value $v_{\min}$. $GAIN_j$ denotes the maximal potential (capped) gain for shifting all the [adjusted] budget to item $j$,  amongst all possible agents $i$ should they divert to purchasing item $j$, and $GAIN_T$ denotes the maximal potential (capped) gain amongst all agents should these agents divert to some item in $T$.

We now define a mapping from $\alpha\in \Delta$ to $F(\alpha)$ as follows:
\[
F(\alpha)_T = \frac{\alpha_T+GAIN_T}{1+\sum_{T'} GAIN_{T'}}
\]

The conditions required to apply Brouwer's fixed point theorem are ($i$) $\Delta$ is convex and closed ($ii$) $F(\alpha) \in \Delta$ for all $\alpha\in \Delta$ and ($iii$) $F(\alpha)$ is continuous. 

Clearly the simplex $\Delta$ is a convex closed set. We show the following:
\begin{lemma}
	\label{cl:continuous}
	The mapping $F$ satisfies: (1) If $\alpha\in \Delta$, then
	$F(\alpha)\in \Delta$, (2) $F$ is a continuous mapping over the
	simplex $\Delta$.
\end{lemma}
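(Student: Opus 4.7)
For part (1), the verification is routine. Non-negativity of $F(\alpha)_T$ follows because every $GAIN_{T'}$ is a maximum of non-negative quantities (so the numerator $\alpha_T + GAIN_T \geq 0$), while the denominator $1 + \sum_{T'} GAIN_{T'} \geq 1$. For the simplex constraint, I would simply sum:
\[
\sum_T F(\alpha)_T \;=\; \frac{\sum_T \alpha_T + \sum_T GAIN_T}{1 + \sum_{T'} GAIN_{T'}} \;=\; \frac{1 + \sum_T GAIN_T}{1 + \sum_{T'} GAIN_{T'}} \;=\; 1,
\]
using $\sum_T \alpha_T = 1$.

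For part (2), since $F$ is a rational function of the $GAIN_T$'s with denominator bounded away from zero, it suffices to establish continuity of each $GAIN_T = \max_{j\in T} \max_{i : T(i)\neq T(j)} GAIN_{i,j}$. I would split into two regimes according to the value of $\alpha_T$. \emph{Interior regime}: at any point $\bar\alpha$ with $\bar\alpha_T > 0$, case (b) of the algorithm cannot trigger for items $j \in T$, so $GAIN_{i,j}$ is defined piecewise by cases (a) and (c). At the common boundary $\alpha_{T(j)} p_j = \alpha_{T(i)} b_i$, a direct substitution shows both expressions collapse to $\max(v_{i,j} - u_i, 0)$, so $GAIN_{i,j}$ is continuous in $\alpha$; the $\max$ with $0$ and the cap at $v_{\min}$ preserve continuity, and so does the outer maximum. \emph{Boundary regime}: at a point $\bar\alpha$ with $\bar\alpha_T = 0$, I claim $GAIN_T \equiv v_{\min}$ on an entire neighborhood. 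At $\bar\alpha$ itself, for every agent $i$ with $T(i)\neq T$, either $\bar\alpha_{T(i)} > 0$ (case (a), yielding $v_{i,j}$) or $\bar\alpha_{T(i)} = 0$ (case (b), yielding $v_{i,j}$); after capping, each $GAIN_{i,j}$ equals $v_{\min}$, so $GAIN_T = v_{\min}$. For points $\alpha$ near $\bar\alpha$, the simplex constraint forces some other tree $T'$ to satisfy $\bar\alpha_{T'} > 0$; because every tree in the scaling process contains at least one agent $i^* \in T'$, and because $\alpha_T p_j \to 0 < \alpha_{T'} b_{i^*}$ for $\alpha$ close enough to $\bar\alpha$, the agent $i^*$ is in case (a) with $GAIN_{i^*,j} \to v_{i^*,j} > v_{\min}$. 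Thus, after the cap, $GAIN_{i^*,j} = v_{\min}$ on a neighborhood, which forces $GAIN_T \geq v_{\min}$; combined with the universal cap $GAIN_T \leq v_{\min}$, this gives $GAIN_T = v_{\min}$ on the neighborhood, hence continuity at $\bar\alpha$.

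The hard part will be handling the boundary regime: the individual function $GAIN_{i,j}$ can genuinely be discontinuous at the corner $(\alpha_{T(i)},\alpha_{T(j)}) = (0,0)$ (approaching from the case-(c) side gives $0$, while the value there is $v_{i,j}$). Both ingredients introduced in the construction are essential to salvage continuity of the max: the cap at $v_{\min} < \min_{i,j} v_{i,j}$ cuts off the problematic large values, while the outer maximum over agents $i$ ensures that a well-behaved agent from a tree bounded away from zero always dominates. The footnote about the constant $0.5$ is precisely what guarantees strict inequality $v_{i,j} > v_{\min}$ used in the last step above.
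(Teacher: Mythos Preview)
Your proof is correct and follows essentially the same approach as the paper: you reduce continuity of $F$ to continuity of each $GAIN_T$, handle the regime $\bar\alpha_T > 0$ by verifying that cases~(a) and~(c) glue continuously at their common boundary, and handle $\bar\alpha_T = 0$ by exhibiting an agent in some other tree with strictly positive multiplier to force $GAIN_T \equiv v_{\min}$ on a whole neighborhood. The paper's argument differs only cosmetically, organizing the case analysis by the sign pattern of $(\alpha_{T(i)},\alpha_{T(j)})$ rather than by $\bar\alpha_T$, and leaving the matching of cases~(a) and~(c) implicit where you spell it out.
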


\begin{proof}
	We first prove (1). Clearly $F(\alpha)\geq 0$. It remains to show that $\sum_T F(\alpha)_T=1$. Indeed,
	\[
	\sum_T F(\alpha)_T = \sum_T \frac{\alpha_T+GAIN_T}{1+\sum_{T'}
		GAIN_{T'}} =  \frac{\sum_T \alpha_T+ \sum_TGAIN_T}{1+\sum_{T'}
		GAIN_{T'}}=1,
	\]
	where the last equality follows by $\sum_T \alpha_T =1$ (since $\alpha\in \Delta$).
	Hence $F(\alpha)\in \Delta$.
	
	We next prove (2). If $\alpha_T >0$ for all $T$, then $F$ is clearly
	continuous. Otherwise, there exists a subset of trees such that $\alpha_T=0$ for every $T$ in this subset. But there must exist a tree $T$ for which $\alpha_T>0$ (since $\sum_T \alpha_T =1$).  For these values of $\alpha$, we will show that for every $j$, $GAIN_j$ is continuous, inferring that $GAIN_{T(j)}$ is also continuous and therefore so is $F(\alpha)$.
	
	If $\alpha_{T(i)},\alpha_{T(j)}>0$, then clearly $GAIN_{i,j}$ is continuous in $\alpha$, and so is $GAIN_j$.
	%It remains to consider the case where $\alpha_{T(j)}p_j=\alpha_{T(i)}b_i$. In both 	cases we have $GAIN_{i,j}=\min(\max(v_{i,j}-u_i,0),v_{\min})$.
	
	If $\alpha_{T(i)}=0$ and $\alpha_{T(j)}>0$, then
	$$GAIN_{i,j}=\min(v_{\min},\max(v_{i,j}\frac{\alpha_{T(i)}b_i}{\alpha_{T(j)}p_j}-u_i,0)),$$
	and $GAIN_{i,j}$ is continuous --- hence $GAIN_j$ is continuous, and so is $GAIN_{T(j)}$. 
	%We've ignored the capping by $v_{\min}$ because for a sufficiently small $\epsilon$, $GAIN_{i,j}$ is significantly smaller than $v_{\min}$. 
	Specifically, for $\alpha_{T(i)}=0$ we have that 	$GAIN_{i,j}=\min(v_{\min},\max(0-u_i,0))=0$ since $u_i>0$. When we change  $\alpha_{T(i)}=\epsilon$, we still have $GAIN_{i,j}=0$, for a sufficiently small $\epsilon>0$, {\sl i.e.}, for $\epsilon\leq \frac{u_i p_j \alpha_{T(j)}}{b_i v_{ij}}.$
	
	%If  $\alpha_{T(j)}\in[0,\epsilon]$, then: 
	The only case not previously considered is when $\alpha_{T(j)}=0$. 
	In this case, regardless of the value of $GAIN_{i,j}\in[0,v_{\min}]$ we will show that $GAIN_j=v_{\min}$.	
This follows since there
	exists a tree $T$ such that $\alpha_T>0$ and all agents $k$ in $T$ have $GAIN_{k,j}=v_{\min}$, which implies that $GAIN_{j}=v_{\min}$.
	When we have  $\alpha_{T(j)}=\epsilon$, for a sufficiently small $\epsilon>0$, we claim that all the agents $k$ in $T$ still have  $GAIN_{k,j}=v_{\min}$. This follows since $v_{k,j}\frac{\alpha_{T(k)}b_k}{\alpha_{T(j)}p_j}$ can be made arbitrarily large --- as $\alpha_{T(k)} > 0$  and $b_k > 0$. This follows since we've temporarily omitted agents with zero budgets from our analysis (recall that for the fixed-point and scaling we ignore degenerate trees that represent agents with zero budgets and empty allocations).
%	The last case assumes there exists an agent $k$ such that $\alpha_{T(k)} > 0$ and $b_k > 0$. This holds because we've temporarily omitted agents with zero budgets from our analysis.
\end{proof}

It is worth noting that $GAIN_{i,j}$ is not a continuous function.\footnote{
Assume that $\alpha_{T(i)}=\beta\alpha_{T(j)}$ and let both converge to
zero. There exist ranges of values of $\beta$ and $u_i$ such that $GAIN_{i,j}= \max( v_{i,j}
(b_i/p_j)\beta-u_i,0)$ is discontinuous. This holds since different $\beta$ values give different $GAIN_{i,j}$ values, even for infinitesimally close values of $\alpha_{T(i)}$ and $\alpha_{T(j)}$.
}

We get around this by analyzing $GAIN_j$, which is continuous. This is achieved since when the multiplier $\alpha_{T(j)}$ is negligible, then agents from other trees would have a significant gain from deviating and purchasing  item $j$ (since its cost is negligible).

Given \Cref{cl:continuous}, we get the following as a direct corollary of Brouwer's fixed-point theorem.
\begin{corollary}
	There exists an $\alpha\in \Delta$ such that $F(\alpha)=\alpha$.
\end{corollary}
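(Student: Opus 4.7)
The plan is to apply Brouwer's fixed-point theorem directly. Recall that Brouwer's theorem guarantees a fixed point for any continuous self-map of a nonempty, compact, convex subset of a finite-dimensional Euclidean space. So the proof reduces to checking these three hypotheses for $F$ and $\Delta$; essentially all of the work has already been absorbed into \Cref{cl:continuous}, which is precisely designed to make this corollary an immediate consequence.

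First I would note that the simplex $\Delta = \{\alpha \in \mathbb{R}^k : \alpha_T \geq 0,\ \sum_T \alpha_T = 1\}$ is nonempty, bounded, closed (hence compact), and convex as a finite intersection of half-spaces with an affine hyperplane; this is standard and was already observed in the text just before the statement of \Cref{cl:continuous}. Next, Part (1) of \Cref{cl:continuous} shows $F(\Delta) \subseteq \Delta$, so $F$ is a self-map. Part (2) of \Cref{cl:continuous} shows that $F$ is continuous on $\Delta$ (the delicate point being continuity on the boundary where some $\alpha_T$ vanish, handled by working with $GAIN_j$ and $GAIN_T$ rather than the individual $GAIN_{i,j}$, which themselves need not be continuous).

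With these three conditions verified, Brouwer's theorem \cite{brouwer1911abbildung,brouwer1952intuitionist} yields the existence of $\alpha^\star \in \Delta$ with $F(\alpha^\star) = \alpha^\star$, which is exactly the statement of the corollary. There is essentially no obstacle at this step; the real technical content lies in \Cref{cl:continuous}, whose proof already did the nontrivial case analysis needed to rule out discontinuities at boundary points of the simplex. The subsequent work in the paper will be to interpret this fixed point: show that at a fixed $\alpha^\star$, the quantities $GAIN_T$ must all vanish (modulo the alternative condition (2) of \Cref{prop:alphas}), which is where the economic meaning of the multipliers will be recovered.
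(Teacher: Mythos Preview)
Your proposal is correct and matches the paper exactly: the paper states this corollary with no proof beyond the line ``Given \Cref{cl:continuous}, we get the following as a direct corollary of Brouwer's fixed-point theorem,'' and your write-up simply spells out the verification of Brouwer's hypotheses (compact convex nonempty $\Delta$, self-map and continuity from \Cref{cl:continuous}) that this sentence leaves implicit.
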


We claim that at the fixed-point, $\alpha_T\neq 0$ for every tree $T$.

\begin{lemma}
	\label{lemma:Brouwer-zero}
	For $\alpha$ such that $F(\alpha)=\alpha$, it holds that $\alpha_T\neq 0$ for every tree $T$.
\end{lemma}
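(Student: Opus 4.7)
The plan is to argue by contradiction. Suppose there is some tree $T^\star$ with $\alpha_{T^\star}=0$. The fixed-point condition $F(\alpha)_{T^\star}=\alpha_{T^\star}$, after clearing the (strictly positive) denominator $1+\sum_{T'}GAIN_{T'}$, reduces to $\alpha_{T^\star}\cdot\sum_{T'}GAIN_{T'}=GAIN_{T^\star}$. Plugging in $\alpha_{T^\star}=0$ forces $GAIN_{T^\star}=0$, so it suffices to derive a contradiction by exhibiting some item $j\in T^\star$ with $GAIN_j>0$.

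Since $\alpha\in\Delta$ sums to $1$, there must be at least one tree $T^\circ$ with $\alpha_{T^\circ}>0$. Because we have omitted the degenerate single-vertex (zero-budget, empty allocation) trees from the scaling, every tree in consideration — and in particular $T^\circ$ — contains at least one agent $k$ with $b_k>0$, and $T^\star$ contains at least one item $j$. Pick such $k\in T^\circ$ and $j\in T^\star$. Since $k$ and $j$ lie in different trees, $GAIN_{k,j}$ is defined by Algorithm~\ref{alg:vertex-arrival}. With $\alpha_{T(j)}=\alpha_{T^\star}=0$, we fall into the branch at lines \ref{line:cb1}--\ref{line:cb2}, giving (before the cap) value $v_{k,j}$, and after the cap $GAIN_{k,j}=\min(v_{k,j},v_{\min})=v_{\min}>0$ (using $v_{\min}=0.5\min v_{k,j}$). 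Hence $GAIN_j\geq GAIN_{k,j}=v_{\min}$, so $GAIN_{T^\star}\geq v_{\min}>0$, contradicting $GAIN_{T^\star}=0$ derived above.

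Thus no tree can have $\alpha_T=0$ at the fixed point, and the lemma follows. The main (and essentially only) subtlety is to keep track of the fact that the degenerate trees representing agents with empty allocations have been removed from the simplex before invoking Brouwer, so that the tree $T^\circ$ with $\alpha_{T^\circ}>0$ is guaranteed to supply an agent $k$ with $b_k>0$; without this one could worry that the branch $\alpha_{T(j)}=0$ in the algorithm is triggered only vacuously. Once this bookkeeping is in place, the contradiction is immediate from the fixed-point equation.
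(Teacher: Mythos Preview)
Your proof is correct and follows the paper's argument essentially step for step: both derive $GAIN_{T^\star}=0$ from the fixed-point identity $\alpha_T\sum_{T'}GAIN_{T'}=GAIN_T$ and then contradict it by exhibiting an agent in a tree with positive multiplier who would gladly take a free item in $T^\star$, forcing $GAIN_{T^\star}=v_{\min}>0$. One minor correction: with $\alpha_{T(j)}=0$ and $\alpha_{T(k)}b_k>0$ the condition $\alpha_{T(j)}p_j<\alpha_{T(k)}b_k$ holds, so Algorithm~\ref{alg:vertex-arrival} actually enters the branch at lines~\ref{line:ca1}--\ref{line:ca2} (yielding pre-cap value $v_{k,j}-0=v_{k,j}$), not lines~\ref{line:cb1}--\ref{line:cb2}; the outcome is the same, so your conclusion stands.
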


\begin{proof}
	At the fixed point, for every tree $T$, we have:
	\[
	\alpha_T = \frac{\alpha_T+GAIN_T}{1+\sum_{T'} GAIN_{T'}}
	\]
	This implies that $\alpha_T \sum_{T'} GAIN_{T'}=GAIN_T$.
	
	Assume for contradiction that $\alpha_T=0$. It follows that $GAIN_T=0$.
	But since $\sum_{T'} \alpha_{T'}=1$, there exists a tree $T'$
	such that $\alpha_{T'}>0$. Since all valuations are strictly
	positive, all the agents in tree $T'$ would prefer to buy any
	item in tree $T$ (since the price is zero and it does not affect their
	budget). This implies that $GAIN_T=v_{\min}$, which is a
	contradiction.
\end{proof}

We proceed with the following lemma.

\begin{lemma}
	Let $\alpha$ be such that $F(\alpha)=\alpha$. Then either (1) $GAIN_T>0$ for every tree $T$, or (2) $GAIN_T=0$ for every tree $T$.
	\label{claim:Brouwer-cycle}
\end{lemma}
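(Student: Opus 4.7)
My plan is to extract from the fixed-point equation a single multiplicative relation linking every $GAIN_T$ to a common scalar, and then argue by a simple dichotomy on that scalar.

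First I would rewrite $F(\alpha)_T = \alpha_T$ as
\[
\alpha_T\Bigl(1+\sum_{T'} GAIN_{T'}\Bigr) = \alpha_T + GAIN_T,
\]
which simplifies to the identity
\[
GAIN_T \;=\; \alpha_T \cdot S, \qquad \text{where } S \triangleq \sum_{T'} GAIN_{T'},
\]
holding simultaneously for every tree $T$. This is exactly the algebraic step already used in the proof of \Cref{lemma:Brouwer-zero}, so I would invoke it directly.

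Next I would apply \Cref{lemma:Brouwer-zero}, which guarantees $\alpha_T > 0$ for \emph{every} tree $T$ at the fixed point. With positivity in hand, the dichotomy is immediate: either $S = 0$, in which case $GAIN_T = \alpha_T \cdot 0 = 0$ for all $T$ (case (2) of the lemma), or $S > 0$, in which case $GAIN_T = \alpha_T \cdot S > 0$ for all $T$ (case (1)). Note that the two cases are exhaustive since $GAIN_T \ge 0$ by construction (the algorithm returns a value in $[0,v_{\min}]$), so $S \ge 0$ always.

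I do not anticipate any real obstacle here: the only content beyond routine algebra is the appeal to the previously established strict positivity of the $\alpha_T$'s, and that has already been proved. The statement is essentially a structural consequence of the fixed-point identity combined with $\alpha_T > 0$.
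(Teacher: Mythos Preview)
Your proposal is correct and follows essentially the same approach as the paper: both arguments rearrange the fixed-point equation to the identity $GAIN_T = \alpha_T \cdot \sum_{T'} GAIN_{T'}$ and then invoke \Cref{lemma:Brouwer-zero} to obtain $\alpha_T > 0$. The only cosmetic difference is that the paper phrases the final step as a contradiction (assuming some $GAIN_T=0$ and some $GAIN_{T'}>0$) whereas you phrase it as a direct dichotomy on $S$; the content is identical.
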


%\begin{proof}
%	Assume for contradiction that there exist $T,T'$ such that $GAIN_T=0$ and $GAIN_{T'}>0$. Since
%	$\alpha$ is a fixed-point, by Lemma \ref{lemma:Brouwer-zero} we have
%	$\alpha>0$ and
%	\[
%	\alpha_T = \frac{\alpha_T+GAIN_T}{1+\sum_{T''} GAIN_{T''}} > 0
%	\]
%	In addition, we assume that  $GAIN_T=0$ and since $GAIN_{T'}>0$ we
%	have $\sum_{T''} GAIN_{T''}>0$. 
%	Therefore the equality does not hold.
%	We've reached a contradiction.
%\end{proof}

\begin{proof}
	Assume towards contradiction that there exist $T,T'$ such that $GAIN_T=0$ and $GAIN_{T'}>0$. Since
	$\alpha$ is a fixed-point, it holds that
	\begin{equation}
	\label{eq:fixed}
	\alpha_T = \frac{\alpha_T+GAIN_T}{1+\sum_{T''} GAIN_{T''}}.
	\end{equation}
	Moreover, by Lemma \ref{lemma:Brouwer-zero}, $\alpha_T>0$.
	Since $GAIN_T=0$, the numerator of the RHS of \eqref{eq:fixed} equals $\alpha_T$. 
	In addition, since $GAIN_{T'}>0$, we have $\sum_{T''} GAIN_{T''}>0$.
	Thus, the denominator of the RHS of \eqref{eq:fixed} is strictly greater than 1. It follows that the ratio is is strictly smaller than $\alpha_T$, contradicting \eqref{eq:fixed}. 
\end{proof}

\subsubsection{No Inter-Tree Deviations}
\label{sec:no-cycle}

In this section we show that there are no inter-tree deviations given our pricing and budgets. We first define a deviation graph over the graph $G(x)$.

\begin{definition}
	Let $G(x)$ be a forest induced from a cycle-free Pareto optimal allocation $x$. Let $p$ be some pricing of the items and $\alpha$ be some scaling of the prices in each tree. For each agent $i$ that is strictly better off purchasing an item $j$, add a directed edge $(i,j)$. The resulting graph is denoted the deviation graph of $(G(x), p, \alpha)$.
\end{definition}

Note that although the graph $G(x)$ is undirected, the deviation graph adds directed edges. Recall that by \Cref{claim:Brouwer-cycle}, for the fixed-point $\alpha$, either the deviation graph contains no directed edges at all, or it contains at least one directed edge into some item vertex within each tree.
By Definition \ref{def:cycle-free} $x$ is cycle-free. %We claim that the deviation graph does not contain any (directed) cycles, for any $\alpha \in \Delta$, including the fixed point.

%Assume towards contradiction that such a cycle exists. 

Given a deviation graph $(G(x), p, \alpha)$, consider a simplified graph $G(V, E)$ where every vertex in $V$ corresponds to a tree in $G(x)$, 
%for each tree in $G(x)$ there is a vertex, 
and we add a directed edge between vertices in $V$ iff there is at least one directed deviation edge between vertices in the corresponding trees in $G(x)$. Note that a vertex in $V$ corresponding to a tree $T$ in $G(x)$ has incoming edge(s) if and only if $GAIN_T > 0$. 

The following is a well known fact; we include a proof for completeness.
\begin{lemma}
	\label{lem:gt-cycle}
	For any directed graph $G(V,E)$, if for each vertex the in-degree is at least 1, there is a directed cycle in $G$.
\end{lemma}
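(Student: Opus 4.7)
The plan is to use a straightforward pigeonhole/backward-traversal argument, which is the standard way to prove this folklore fact. The key observation is that the in-degree condition lets us walk backward along edges indefinitely, while finiteness of $V$ forces a repeat.

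Specifically, I would start by picking an arbitrary vertex $v_0 \in V$. Since the in-degree of $v_0$ is at least $1$, there exists some $v_1 \in V$ with a directed edge $(v_1, v_0) \in E$. Inductively, having defined $v_k$, the in-degree hypothesis guarantees a vertex $v_{k+1}$ with $(v_{k+1}, v_k) \in E$. This produces an infinite sequence $v_0, v_1, v_2, \ldots$ of vertices such that each consecutive pair forms a directed edge $v_{k+1} \to v_k$.

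Since $V$ is finite, the pigeonhole principle implies that there exist indices $i < j$ with $v_i = v_j$. Then the subsequence $v_j \to v_{j-1} \to \cdots \to v_{i+1} \to v_i = v_j$ is a closed directed walk of length $j - i \geq 1$, which contains a directed cycle. This completes the proof.

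There is essentially no obstacle here; the argument is a two-line pigeonhole. The only minor subtlety is distinguishing a closed walk from a simple cycle, but any closed directed walk contains a simple cycle (by taking the first repeated vertex along the walk), so no further work is needed.
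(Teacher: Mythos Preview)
Your proof is correct and is essentially identical to the paper's own argument: both perform a backward walk along incoming edges starting from an arbitrary vertex, invoke finiteness of $V$ to force a repeated vertex, and extract the directed cycle from the resulting closed walk.
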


\begin{proof}
	Start from some arbitrary vertex $v_1$. Since $v_1$'s in-degree is at least $1$, there exists a directed edge $(v_2, v_1)$. Move to $v_2$ and repeat the backward walk on the graph. Since the graph is finite, the path must reach a previously visited vertex. The reverse order of the visited vertices contains a cycle. 
\end{proof}

By Proposition~\ref{prop:alphas} and Lemma~\ref{lem:gt-cycle}, if some agent wishes to diverge, the simplified graph corresponding to $G(x)$ contains a directed cycle.

We proceed as follows:

\begin{itemize}
	 
	\item This directed cycle can be expanded to an agent-item cycle, such that when the agent and item are in the same tree, the allocation is non zero.
	%We expand the cycle to an agent-item cycle, such that when the agent and item are in the same tree, the allocation is non zero.
	\item In this case, we show that the allocation can be improved by changing the allocation along the cycle, while contradicting Pareto optimality. 
	\item In summary, it follows that there cannot be an agent in one tree who wishes to deviate to an item in another tree. 
\end{itemize}

%\mfnote{add a figure showing the deviation graph and the corresponding simple graph $G(V,E)$.}

We extend the directed cycle in the simplified graph $G(V, E)$ to a directed cycle in the deviation graph $(G(x), p, \alpha)$. To construct a directed cycle one can use any undirected edge of $G(x)$, either from an item to an agent or from an agent to an item (but not both).

A cycle in $G(x)$ is an {\em alternating cycle} if the vertices along the cycle alternate between vertices corresponding to agents and items.

\begin{lemma}
	Given a directed cycle in $G(V,E)$, 
	%between the trees, 
	one can extend the cycle to an alternating cycle in $G(x)$, such that 
	for every edge along the cycle that connects an agent $i$ and an item $j$ that belong to the same tree, 
	%when the agent and item are in the same tree, 
	the allocation $x_{i,j}$ is non zero. For agent-item pairs crossing trees, the agent has a strict preference to deviate.
\end{lemma}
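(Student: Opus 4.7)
The plan is to walk around the directed cycle in $G(V,E)$ one tree at a time, splicing together inter-tree deviation edges with the (unique) agent-item paths inside each tree, producing a closed alternating walk that can then be shortened to a simple alternating cycle. Write the directed cycle in $G(V,E)$ as $T_1 \to T_2 \to \cdots \to T_k \to T_1$ (indices mod $k$). By the definition of $G(V,E)$, each directed edge $T_\ell \to T_{\ell+1}$ is witnessed by at least one deviation edge $(i_\ell, j_{\ell+1})$ in the deviation graph $(G(x), p, \alpha)$, where $i_\ell \in T_\ell$ is an agent and $j_{\ell+1} \in T_{\ell+1}$ is an item with $GAIN_{i_\ell, j_{\ell+1}} > 0$; i.e., agent $i_\ell$ strictly prefers to deviate toward item $j_{\ell+1}$. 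Pick one such witness $(i_\ell, j_{\ell+1})$ for every $\ell$.

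Next, I fill in the intra-tree pieces. Within each tree $T_\ell$, both the incoming item $j_\ell$ and the outgoing agent $i_\ell$ are vertices of $T_\ell$. Since $T_\ell$ is a tree, there is a unique path $P_\ell$ in $T_\ell$ connecting $j_\ell$ to $i_\ell$. The graph $G(x)$ is bipartite (agents on one side, items on the other), so $P_\ell$ automatically alternates between items and agents, beginning at $j_\ell$ and ending at $i_\ell$. Moreover, by the very definition of $G(x)$, every edge $(i,j)$ along $P_\ell$ corresponds to $x_{i,j} > 0$, so the intra-tree edges of $P_\ell$ satisfy the required non-zero allocation property. Concatenating $P_1, (i_1, j_2), P_2, (i_2, j_3), \ldots, P_k, (i_k, j_1)$ yields a closed walk $W$ in the union of $G(x)$ with the inter-tree deviation edges. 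Alternation is preserved at every junction: each $P_\ell$ ends at the agent $i_\ell$, the deviation edge $(i_\ell, j_{\ell+1})$ goes to an item, and the next $P_{\ell+1}$ begins at that item.

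If $W$ is already simple, we are done. Otherwise, any repeated vertex $v$ in $W$ splits $W$ into two shorter closed alternating walks meeting at $v$. At least one of them must still contain a deviation edge, because a closed walk consisting entirely of intra-tree edges would produce a cycle inside some single $T_\ell$, contradicting the fact that each $T_\ell$ is a tree. Iterating this shortcutting step, I arrive at a simple closed alternating walk, i.e., an alternating cycle in $G(x)$ together with deviation edges, in which each same-tree edge carries a positive allocation and each cross-tree edge is a strict deviation. The main subtlety is precisely this final extraction step: one has to argue that shortcutting never eliminates all deviation edges at once, which follows from the acyclicity of the individual trees that make up $G(x)$.
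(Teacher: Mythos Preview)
Your proof is correct and follows the same approach as the paper: pick a witness deviation edge for each arc of the tree-level cycle and splice these with the unique intra-tree paths from incoming item to outgoing agent. Your final shortcutting step is unnecessary, however, since the directed cycle in $G(V,E)$ is simple, so the trees $T_1,\ldots,T_k$ are pairwise distinct with disjoint vertex sets, and each $P_\ell$ is a simple path in $T_\ell$; hence the concatenated walk $W$ is already a simple alternating cycle.
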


\begin{proof}
	Fix a cycle between trees. Every tree $T$ along the cycle has an incoming edge to an item $j$ in $T$ and an outgoing edge to the next tree from an agent $i$ in $T$.
	There is a unique (undirected) path in $T$ from $j$ to $i$, representing non-zero allocations of items to agents along the path. By connecting all these paths by the directed edges between the trees, we obtain a cycle in the graph.
\end{proof}

%\afnote{All of the rest of this section has had many changes, including the use of $\widehat{p}$ instead of $q$ and $\widehat{b}$ to denote the ``adjusted price" and ``adjusted budget".
%	Also, all the inline fractions were discarded as unreadable. There are also appropriate changes in the max min section to avoid the reuse of $p$ and $b$ to reflect both adjusted and non-adjusted prices and budgets. Also note the use of the word "adjusted" which sounds better (to me) than "scaled" which has to do with snakes and fish.}

\begin{definition}
	\label{def:adjusted}
For any item $j$, let $\widehat{p}_j = p_j \cdot \alpha_{T(j)}$, where $T(j)$ is the tree containing  item $j$. We refer to  $\widehat{p}_j$ as the adjusted price of item $j$. Likewise, let $\widehat{b}=b_i \cdot \alpha_{T(i)}$ be the adjusted budget of agent $i$, where $T(i)$ is the tree to which $i$ belongs. 
\end{definition}

By using adjusted prices $\widehat{p}$ instead of $p$, we can simplify our notation and remove $\alpha$ when it is fixed. We also use the adjusted budget $\widehat{b}$ in subsequent proofs. 

The following three lemmata are analogous to Lemmata \ref{lem:internal}, \ref{lem:deviator} and \ref{lem:devtree} used in the proof of no deviation within a single tree.

\begin{lemma}
	\label{lem:same-tree-cycle}
	Consider an agent $i$ and items $j,j'$ such that $0<x_{i,j}<1$, $0<x_{i,j'}\leq 1$.
	Let $x'_{i,j}=x_{i,j}+\epsilon{\widehat{p}_{j'}}/{\widehat{p}_{j}}$, and $x'_{i,j'}=x_{i,j'}-\epsilon$, for a sufficiently small $\epsilon$.
	Then, $u_i(x)=u_i(x')$, and the payment of agent $i$ for $x$ and $x'$ is identical.
	
\end{lemma}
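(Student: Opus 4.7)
The plan is to reduce this statement to \Cref{lem:internal} by observing that all three of $i$, $j$, and $j'$ must lie in the same tree of $G(x)$. Indeed, the hypotheses $x_{i,j}>0$ and $x_{i,j'}>0$ imply that both $(i,j)$ and $(i,j')$ are edges of $G(x)$, so $T(i)=T(j)=T(j')$, call it $T$. Consequently $\alpha_{T(j)}=\alpha_{T(j')}=\alpha_T$, and in the ratio that governs the perturbation the multipliers cancel:
\[
\frac{\widehat{p}_{j'}}{\widehat{p}_{j}} = \frac{\alpha_T\, p_{j'}}{\alpha_T\, p_{j}} = \frac{p_{j'}}{p_{j}}.
\]
So the perturbation $(x'_{i,j},x'_{i,j'})$ defined in the statement is exactly the one considered in \Cref{lem:internal}, and the conclusion of that lemma applies verbatim to the unadjusted valuations and prices.

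To make the calculation explicit under adjusted prices, I would choose $\epsilon \leq \min\{x_{i,j'},\,(1-x_{i,j})\widehat{p}_{j}/\widehat{p}_{j'}\}$ so that $x'_{i,j},x'_{i,j'}\in[0,1]$. The change in payment is
\[
\epsilon\,\frac{\widehat{p}_{j'}}{\widehat{p}_{j}}\cdot \widehat{p}_j \;-\; \epsilon\,\widehat{p}_{j'} \;=\; 0,
\]
and the change in utility is
\[
\epsilon\,\frac{\widehat{p}_{j'}}{\widehat{p}_{j}}\cdot v_{i,j} \;-\; \epsilon\, v_{i,j'} \;=\; \epsilon\,\frac{p_{j'}}{p_j}\, v_{i,j} - \epsilon\, v_{i,j'} \;=\; 0,
\]
where the last equality is the in-tree indifference relation \eqref{eq:indifference}, which holds because agent $i$ is adjacent in $T$ to both $j$ and $j'$ (this is the content of how items were priced within a single tree in \Cref{sec:price-tree}).

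The main (and only) conceptual obstacle is to recognize that although the statement is phrased in terms of the global adjusted prices $\widehat{p}$, the positive-allocation hypothesis confines the perturbation to a single tree, where the scaling factor $\alpha_T$ drops out of every ratio that matters. Once that observation is in place, the lemma is essentially a restatement of \Cref{lem:internal}; there are no new estimates or continuity arguments to carry out.
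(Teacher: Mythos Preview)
Your proof is correct and follows essentially the same approach as the paper: both first observe that $x_{i,j},x_{i,j'}>0$ forces $T(i)=T(j)=T(j')$, then compute directly that the payment and utility changes vanish via Equation~\eqref{eq:indifference}. Your explicit reduction to \Cref{lem:internal} via the cancellation $\widehat{p}_{j'}/\widehat{p}_{j}=p_{j'}/p_{j}$ is a nice touch, and your feasibility bound $\epsilon\le\min\{x_{i,j'},(1-x_{i,j})\widehat{p}_{j}/\widehat{p}_{j'}\}$ is in fact the correct one (the paper's displayed bound here has the ratio inverted, though its earlier \Cref{lem:internal} has it right).
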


\begin{proof}
	Since both $x_{i,j}$ and $x_{i,j'}$ are strictly positive, $T(i) = T(j) = T(j')$.
	Let $\epsilon \leq \min\{x_{i,j'},(1-x_{i,j}){\widehat{p}_{j'}}/{\widehat{p}_j}\}$.
	The payment of agent $i$ changes by $\epsilon {\widehat{p}_{j'}}/{\widehat{p}_j} \cdot \widehat{p}_{j} - \epsilon \widehat{p}_{j'} = 0$.
	The utility of agent $i$ changes by $\epsilon {\widehat{p}_{j'}}/{\widehat{p}_j} \cdot v_{i,j} - \epsilon v_{i,j'} = 0$,
	where the last equation follows by ${v_{i,j}}/{\widehat{p}_j} = {v_{i,j'}}/{\widehat{p}_{j'}}$, which holds due to Equation \eqref{eq:indifference}.
\end{proof}

\begin{lemma}
	\label{lem:diff-tree-cycle}
	Consider an agent $i$ and items $j,j'$ such that $x_{i,j}>0$, $x_{i,j'}=0$.
	Let $x'_{i,j}=x_{i,j}-\epsilon{\widehat{p}_{j'}}/{\widehat{p}_j}$, and $x'_{i,j'}=\epsilon$, for $\epsilon \leq \min\{1,x_{i,j}{\widehat{p}_{j}}/{\widehat{p}_{j'}}\}$.
	If ${v_{i,j'}}/{\widehat{p}_{j'}}>{v_{i,j}}/{\widehat{p}_{j}}$, then $u_i\left(x_i, \widehat{p}, \widehat{b}_i\right) < u_i\left(x'_i, \widehat{p}, \widehat{b}_i\right)$, and the payment of agent $i$ for $x$ and $x'$ is identical.
	
\end{lemma}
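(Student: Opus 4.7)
The plan is to mirror the proof of Lemma \ref{lem:deviator}, substituting adjusted prices $\widehat{p}$ for $p$ and the adjusted budget $\widehat{b}_i$ for $b_i$; the statement is formally the same as Lemma \ref{lem:deviator} after this substitution, so the same two-line calculation should work. No additional structural ideas are needed — the content is entirely computational.

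First, I would verify that $x'$ is a legal allocation for agent $i$. The hypothesis $\epsilon \leq \min\{1, x_{i,j}\widehat{p}_j/\widehat{p}_{j'}\}$ gives $x'_{i,j'} = \epsilon \in [0,1]$ and $x'_{i,j} = x_{i,j} - \epsilon\widehat{p}_{j'}/\widehat{p}_j \geq 0$; since also $x'_{i,j} \leq x_{i,j} \leq 1$, feasibility holds. Next I would compute the change in payment under adjusted prices: agent $i$ gains a charge of $\epsilon\widehat{p}_{j'}$ for item $j'$ and saves $(\epsilon\widehat{p}_{j'}/\widehat{p}_j)\cdot \widehat{p}_j = \epsilon\widehat{p}_{j'}$ on item $j$, for a net change of $0$. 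In particular, if the original payment under allocation $x$ fits within $\widehat{b}_i$, so does the payment under $x'$, which guarantees $u_i(x'_i,\widehat{p},\widehat{b}_i)$ is not the penalty value $-1$ from Definition \ref{def:utility}.

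Finally, I would compute the change in utility:
\[
u_i(x'_i,\widehat{p},\widehat{b}_i) - u_i(x_i,\widehat{p},\widehat{b}_i) \;=\; \epsilon v_{i,j'} - \frac{\epsilon\widehat{p}_{j'}}{\widehat{p}_j}\, v_{i,j} \;=\; \epsilon\widehat{p}_{j'}\left(\frac{v_{i,j'}}{\widehat{p}_{j'}} - \frac{v_{i,j}}{\widehat{p}_j}\right),
\]
which is strictly positive because $\epsilon > 0$, $\widehat{p}_{j'} > 0$, and the hypothesis gives $v_{i,j'}/\widehat{p}_{j'} > v_{i,j}/\widehat{p}_j$.

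I do not expect any real obstacle here; the only subtlety worth flagging is that $u_i$ as defined is discontinuous at the budget boundary (it drops to $-1$), so one must explicitly note that the payment is unchanged to conclude that $u_i$ is given by its linear branch for both $x_i$ and $x'_i$. Once that is observed, the inequality follows from the one-line computation above exactly as in Lemma \ref{lem:deviator}.
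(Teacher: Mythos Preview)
Your proposal is correct and mirrors the paper's own proof almost verbatim: the paper also simply computes that the payment change is $\epsilon\widehat{p}_{j'}-\epsilon(\widehat{p}_{j'}/\widehat{p}_j)\widehat{p}_j=0$ and that the utility change $\epsilon v_{i,j'}-\epsilon(\widehat{p}_{j'}/\widehat{p}_j)v_{i,j}$ is strictly positive by the hypothesis on the bang-per-buck ratios. Your added remarks on feasibility and on the $-1$ branch of $u_i$ are harmless extra care that the paper omits.
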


\begin{proof}
	%Let $\epsilon \leq \min\{1,x_{i,j}\frac{$\widehat{p}$_{j}}{$\widehat{p}$_{j'}}\}$.
	The payment of agent $i$ changes by $\epsilon \widehat{p}_{j'} - \epsilon {\widehat{p}_{j'}}/{\widehat{p}_{j}} \cdot \widehat{p}_{j} = 0$.
	The utility of agent $i$ changes by $\epsilon v_{i,j'} -\epsilon {\widehat{p}_{j'}}/{\widehat{p}_j} \cdot v_{i,j} > 0$,
	where the last inequality follows by our assumption that ${v_{i,j'}}/{\widehat{p}_{j'}}>{v_{i,j}}/{\widehat{p}_{j}}$.
\end{proof}

\begin{lemma}
	\label{lem:paretodominance}
	Consider a deviation cycle $(i_1, j_1, i_2, j_2, \ldots, i_{k}, j_{k}, i_{k+1} = i_1)$,
	%\afnote{ 
	where $i_1$ and $j_1$ belong to different trees.
%}
	Every agent $i_s$ (on the cycle) has associated $\epsilon_{s-1}$ and $\epsilon_s$ such that the allocation $x'_{i_s, j_s} = x_{i_s, j_s} + \epsilon_s$ and $x'_{i_s, j_{s-1}} = x_{i_s, j_{s-1}} - \epsilon_{s-1}$ is feasible. 
	Moreover, $x'$ Pareto dominates $x$. 
\end{lemma}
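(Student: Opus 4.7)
The plan is to cascade payment-preserving token shifts around the cycle, so that every agent's payment is preserved, every within-tree agent's utility is preserved, and every cross-tree deviating agent's utility strictly improves. I pick a small $\epsilon_1>0$ and define
\[
\epsilon_s \;=\; \epsilon_{s-1}\,\widehat{p}_{j_{s-1}}/\widehat{p}_{j_s}, \qquad s=2,\dots,k,
\]
with indices read cyclically so that $\epsilon_0 = \epsilon_k$. This choice makes the net payment change of each agent $i_s$ equal to $\widehat{p}_{j_s}\epsilon_s-\widehat{p}_{j_{s-1}}\epsilon_{s-1}=0$. Telescoping yields $\epsilon_s=\epsilon_1\,\widehat{p}_{j_1}/\widehat{p}_{j_s}$, and the same rule closes the cycle consistently back at $\epsilon_1$.

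Feasibility is immediate once $\epsilon_1$ is small. From the construction of the extended cycle in the preceding lemma, every item-to-agent edge $(j_{s-1},i_s)$ lies inside a single tree and so carries $x_{i_s,j_{s-1}}>0$; every agent-to-item edge $(i_s,j_s)$ is either a within-tree edge with $0<x_{i_s,j_s}<1$ (the upper bound follows from $x_{i_{s+1},j_s}>0$) or a cross-tree deviation edge with $x_{i_s,j_s}=0$. For sufficiently small $\epsilon_1$ the per-agent constraints $x'_{i_s,j_{s-1}}\ge 0$ and $x'_{i_s,j_s}\le 1$ all hold. Moreover, each item $j_s$ receives $+\epsilon_s$ from $i_s$ and $-\epsilon_s$ from $i_{s+1}$, so item totals are preserved and $x'$ is a valid allocation.

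For Pareto dominance I split by the type of $(i_s,j_s)$. If it is a within-tree edge, then both $j_{s-1}$ and $j_s$ lie in $T(i_s)$ with positive allocation, and Lemma~\ref{lem:same-tree-cycle}, applied with $\epsilon=\epsilon_{s-1}$, leaves $i_s$'s utility and payment unchanged. If it is a cross-tree deviation edge, then $x_{i_s,j_s}=0$ and $\mathrm{GAIN}_{i_s,j_s}>0$; the within-tree indifference identity $v_{i_s,j}/\widehat{p}_j=u_{i_s}/\widehat{b}_{i_s}$ (valid for any $j\in T(i_s)$ with $x_{i_s,j}>0$, in particular for $j=j_{s-1}$), combined with inspection of Algorithm~\ref{alg:vertex-arrival}, forces
\[
v_{i_s,j_s}/\widehat{p}_{j_s} \;>\; v_{i_s,j_{s-1}}/\widehat{p}_{j_{s-1}},
\]
and Lemma~\ref{lem:diff-tree-cycle} then gives $u_{i_s}(x'_i,\widehat{p},\widehat{b}_i)>u_{i_s}(x_i,\widehat{p},\widehat{b}_i)$ at unchanged payment. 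Since the directed cycle in the simplified graph crosses at least one pair of distinct trees, at least one deviation edge appears on the cycle, so at least one agent strictly improves while no agent loses utility; hence $x'$ Pareto dominates $x$.

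The main obstacle is the bookkeeping around the hybrid (tree-edge / deviation-edge) structure of the cycle. One must pin down that every item-to-agent cycle-edge is necessarily within-tree --- this follows from the fact that in the expansion of the preceding lemma, cross-tree edges are oriented agent-to-item --- and that $\mathrm{GAIN}_{i_s,j_s}>0$ at an exit agent translates, via the within-tree indifference identity, into precisely the strict inequality needed to feed Lemma~\ref{lem:diff-tree-cycle}. Once these two points are in hand, the conclusion is direct substitution into Lemmas~\ref{lem:same-tree-cycle} and \ref{lem:diff-tree-cycle}.
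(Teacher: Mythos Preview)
Your proposal is correct and follows essentially the same approach as the paper: define $\epsilon_s=\epsilon_1\widehat{p}_{j_1}/\widehat{p}_{j_s}$, observe that payments and item totals are preserved, verify feasibility for small $\epsilon_1$, and invoke Lemmas~\ref{lem:same-tree-cycle} and~\ref{lem:diff-tree-cycle} for within-tree and cross-tree edges respectively. You are in fact slightly more careful than the paper in explicitly deriving the strict inequality $v_{i_s,j_s}/\widehat{p}_{j_s}>v_{i_s,j_{s-1}}/\widehat{p}_{j_{s-1}}$ from $\mathrm{GAIN}_{i_s,j_s}>0$ via the within-tree indifference identity, which is precisely the hypothesis that Lemma~\ref{lem:diff-tree-cycle} requires.
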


\begin{proof}
	Fix some sufficiently small $\epsilon_1$, to be determined later, and let $\epsilon_2 = \epsilon_1 {\widehat{p}_1}/{\widehat{p}_2}$. In general, $$\epsilon_s = \epsilon_{s-1} {\widehat{p}_{s-1}}/{\widehat{p}_s} = \epsilon_1 {\widehat{p}_1}/{\widehat{p}_s}.$$
	Note that $\epsilon_{k+1} = \epsilon_1{\widehat{p}_1}/{\widehat{p}_{k+1}} = \epsilon_1$ since $\widehat{p}_{k+1} = \widehat{p}_1$.
	
%	\afnote{
		We remark that as $i_1$ and $j_1$ belong to different trees, and there are no trivial trees (agents allocated nothing) it follows that all inter-tree directed edges go from an odd indexed agent $i$ to an odd indexed item $j$. In allocation $x$ nothing is allocated across different trees, {\sl i.e.}, the allocation of item $j$ to agent $i$ is zero. Note, however, that such allocations only increase and thus the resulting allocation is legitimate for sufficiently small values of the $\epsilon_1$.
	%}
	
	For every item $j_s$ in the cycle, the net change in its allocation is zero, because $x_{i_{s-1},j_s} + x_{{i_s},j_s} = x'_{i_{s-1},j_s} + x'_{{i_s},j_s}$.
	
	For every agent $i_s$, the change in the payment is zero, because $\epsilon_{s-1}\widehat{p}_{s-1} = \epsilon_1 = \epsilon_s \widehat{p}_s$.
	
	For every $s$, $x_{{i_s},j_{s-1}} > 0$ since $T(i_s) = T(j_{s-1})$, therefore, for $\epsilon_{s-1} \leq x_{{i_s},j_{s-1}}$ we have that $x'_{{i_s},j_{s-1}} \geq 0$.
	
	Since the net change in the allocation is zero and $x'_{{i_s},j_{s-1}} \geq 0$ it implies that $x'_{{i_s},j_{s}} \leq 1$.
	
	It remains to show that $x'$ Pareto dominates $x$. By \Cref{lem:same-tree-cycle} agents whose two adjacent items (along the cycle)
	are in the same tree are indifferent. By \Cref{lem:diff-tree-cycle}, an agent whose subsequent item resides in a different tree is strictly better off. This concludes the proof.
\end{proof}

%Using the arguments above, this means that there are no deviation edges between trees when scaling using the fixed point $\alpha$. {\sl I.e.}, $GAIN_T=0$ for all trees $T$.

Using the arguments above, we conclude that whenever payments and budgets are adjusted by multiplication by the $\alpha$ coordinates ($\alpha$ is a fixed point), it holds that $GAIN_T=0$ for all trees $T$; i.e., there are no deviations.  

\subsubsection{Computing the fixed point by a linear program}

Given an allocation $x$ where $G(x)$ is a forest, we now show how to compute a fixed point $\alpha$, using 
a linear program. 
Each agent $i$ is located in some tree $T(i)$, and each item $j$ is located in some (possibly different) tree $T(j)$.
The following set of constraints defines the set of fixed points.
By \Cref{sec:fixed-existence}, such a fixed point $\alpha$ exists. By \Cref{sec:no-cycle}, the fixed point supports allocation $x$. Recall the variables from definition \ref{def:utility}. 

The total ``bang per token" of agent $i$ is $u_i/\left(b_i \alpha_{T(i)}\right)$, {\sl i.e.}, to gain one token agent $i$ needs to give up $u_i/\left(b_i \alpha_{T(i)}\right)$ units of utility. 
If agent $i$ deviates and chooses to purchase some of item $j$, the profit per token is $v_{ij}/\left(p_j \alpha_{T(j)}\right)$.
The following linear program gives us the required scaling factors $\alpha_T$:  

\begin{eqnarray*}
	&&\max \lambda \\ \mbox{such that} \\
	\forall i \in A, j \in I, &&\frac{u_i}{b_i}\cdot {\alpha_{T(j)}} \geq \frac{v_{ij}}{p_j}\cdot {\alpha_{T(i)}}\\
	&&\sum_T \alpha_{T} = 1\\
	&&\lambda \leq \alpha_{T} \leq 1\end{eqnarray*}

%Multiplying by $\alpha_{T(j)}$, we get $$\forall i \in A, j \in I, \frac{u_i}{b_i}\cdot {\alpha_{T(j)}} \geq \frac{v_{ij}}{p_j}\cdot {\alpha_{T(i)}}$$
%which is linear. 

By \Cref{lemma:Brouwer-zero} we are guaranteed that $\lambda>0$.

\noindent{\bf Computational complexity:} The complexity of pricing items and computing agent budgets is linear in $G(x)$. Finding the scaling factors can be done using the linear program above. In some special cases we can compute a desired allocation, the prices and budgets in nearly linear time (see Section \ref{sec:maxmin}).

Recall that the fixed point was computed ignoring degenerate trees that represent agents with zero budgets and empty allocations. It is now safe to reconsider these agents again in our analysis. Since for every non-degenerate tree $T$ we have in the fixed point $\alpha_T > 0$, there are no free items, and therefore these agents cannot afford any item. They also were not allocated any items.
%that other agents may be interested in.

To summarize: 

\begin{proposition}
	For any cycle-free Pareto optimal allocation there exist item prices and agent budgets supporting this allocation. Such prices and budgets can be computed efficiently.
\end{proposition}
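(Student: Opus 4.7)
The plan is to combine the tree-by-tree pricing of Section \ref{sec:price-tree}, the fixed-point scaling from Section \ref{sec:fixed-existence}, and the no-inter-tree-deviation argument from Section \ref{sec:no-cycle}, and then observe that the scaling step can be replaced by the linear program at the end of Section \ref{sec:price-forest}. Concretely, since $x$ is cycle-free, $G(x)$ is a forest; we first handle each tree separately and then glue the trees together by scaling.

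The first step is to apply the per-tree pricing rule: pick an arbitrary root agent in each nontrivial tree $T$, set the prices of her incident items to her valuations, and propagate prices down the tree using Equation \eqref{eq:indifference}, so that each agent is indifferent between all items she is allocated. Set $b_i = \sum_j x_{i,j} p_j$ so that $x_i$ exactly exhausts the budget. Lemma \ref{lem:internal} then gives that shifting tokens among adjacent items is utility-neutral, and Lemma \ref{lem:devtree} (proved via Lemma \ref{lem:deviator} together with a cycle argument that would otherwise contradict Pareto optimality of $x$) shows that no agent prefers shifting tokens to any unallocated item in the same tree. Degenerate single-agent trees are handled by setting the budget to zero.

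The second step is to glue the trees. By Proposition \ref{prop:alphas} there exist scalar multipliers $\{\alpha_T\}_T$ with $\alpha_T > 0$ for every nondegenerate tree (Lemma \ref{lemma:Brouwer-zero}) such that after replacing $(p_j, b_i)$ by the adjusted quantities $(\widehat{p}_j, \widehat{b}_i)$ from Definition \ref{def:adjusted}, either no agent wishes to deviate to any item in another tree, or every tree is targeted by some deviating agent. The latter case is ruled out by combining Lemma \ref{claim:Brouwer-cycle}, Lemma \ref{lem:gt-cycle}, and Lemma \ref{lem:paretodominance}: a fixed point in which some $GAIN_T>0$ would produce a directed cycle in the simplified tree-graph, which lifts to an alternating agent-item cycle in $G(x)$ along which the allocation can be shifted to strictly improve the deviating agent while keeping all other agents on the cycle indifferent, contradicting Pareto optimality of $x$. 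Thus at any fixed point $\alpha$, $GAIN_T = 0$ for every tree $T$, which exactly says that $(\widehat{p}, \widehat{b})$ supports $x$.

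For efficiency, the per-tree pricing runs in time linear in $|G(x)|$ by a simple BFS/DFS from the chosen root. To compute the scaling vector $\alpha$ in polynomial time, we invoke the linear program stated at the end of Section \ref{sec:price-forest}: maximize $\lambda$ subject to $\tfrac{u_i}{b_i}\alpha_{T(j)} \geq \tfrac{v_{ij}}{p_j}\alpha_{T(i)}$ for all $(i,j)$, $\sum_T \alpha_T = 1$, and $\lambda \leq \alpha_T \leq 1$. Every feasible $\alpha$ with $\lambda > 0$ makes $GAIN_{i,j} = 0$ for all $(i,j)$ in other trees and therefore supports $x$; existence of such an $\alpha$ follows from the fixed-point argument, and Lemma \ref{lemma:Brouwer-zero} guarantees $\lambda > 0$ at the optimum. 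The main subtlety, already dispatched in Section \ref{sec:price-forest}, is that $GAIN_{i,j}$ itself is not continuous in $\alpha$; the trick is that the max over $i$ collapses the discontinuity because if some $\alpha_{T(j)} \to 0$ then some agent from a tree with positive $\alpha$ attains $GAIN_{i,j} = v_{\min}$, which forces $\alpha_T > 0$ at any fixed point. Once this is accepted, the proposition follows immediately by stitching the three steps together.
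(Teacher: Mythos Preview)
Your proposal is correct and mirrors the paper's approach exactly: the paper states this proposition as a summary of Sections~\ref{sec:price-tree}--\ref{sec:price-forest} without a separate proof environment, and you have accurately recapitulated the three ingredients (per-tree pricing via Lemmas~\ref{lem:internal}--\ref{lem:devtree}, fixed-point scaling via Proposition~\ref{prop:alphas} and Lemmas~\ref{lemma:Brouwer-zero}--\ref{claim:Brouwer-cycle}, ruling out the all-positive-$GAIN$ case via the cycle argument culminating in Lemma~\ref{lem:paretodominance}, and efficient computation via the LP).
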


\subsubsection{Supporting the Original Pareto optimal Allocation}

%Then, given the modified allocation, we build prices and budgets.
%The contradiction in the cycles is always to the fact that the
%allocation is not Pareto optimal. The only change is (hopefully) in the last step, saying that if an allocation is not Pareto optimal it is not max-min.

Recall that starting with an arbitrary Pareto optimal allocation $y$, we obtained a cycle-free Pareto optimal allocation $x$ with the same agent utilities as in $y$.
In this section we show that the prices and budgets that support allocation $x$ support the original allocation $y$ as well.

Recall Definition \ref{def:adjusted} where we use $\widehat{p}_j$ to denote the adjusted  price of item $j$ and $\widehat{b}_i$ to denote the adjusted budget of agent $i$. 

%\mfnote{Question (for ourselves): can we extend beyond additive valuations? (currently, we use additivity in the first stage of the algorithm).}

\begin{lemma}
	\label{thm:origalloc}
	Given an initial Pareto-optimal allocation $y$, the item prices and agent budgets obtained above to support the modified cycle-free allocation $x$ also support the original allocation $y$.
\end{lemma}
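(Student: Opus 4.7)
The plan is to give a short global accounting argument based on the ``bang-per-buck'' inequality already implied by the fact that $(\widehat{p},\widehat{b})$ supports $x$. For each agent $i$ with $\widehat{b}_i>0$, let $\rho_i := u_i/\widehat{b}_i$, where $u_i = \sum_j x_{ij} v_{ij}$. Because $x_i$ lies in $i$'s demand set, and because $b_i$ was defined as the exact cost of $x_i$, this $\rho_i$ is the maximum achievable utility per token under $(\widehat{p},\widehat{b})$; i.e.\ $v_{ij}/\widehat{p}_j \le \rho_i$ for every item $j$. For items $j$ in the same tree as $i$ this is Lemma~\ref{lem:devtree} (applied to the adjusted prices), and for items $j$ in a different tree this is precisely the statement that $GAIN_{i,j}=0$ at the Brouwer fixed point, guaranteed by Lemma~\ref{claim:Brouwer-cycle} together with the no-inter-tree-cycle argument of Section~\ref{sec:no-cycle}.

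Next, I would invoke the utility preservation from \Cref{prop:cycle-free-pareto}: the cycle-elimination procedure gives $\sum_j y_{ij} v_{ij} = u_i = \rho_i \widehat{b}_i$ for every $i$. Combined with $v_{ij}\le \rho_i \widehat{p}_j$, this yields termwise
\[
\rho_i \widehat{b}_i \;=\; \sum_j y_{ij} v_{ij} \;\le\; \rho_i \sum_j y_{ij} \widehat{p}_j,
\]
so the cost of $y_i$ under $\widehat{p}$ is at least $\widehat{b}_i$. The key step is then to show that this inequality is in fact an equality for every $i$, by a simple counting argument. Since $y$ is Pareto optimal and all valuations are strictly positive, every item is fully allocated under $y$, so
\[
\sum_i \sum_j y_{ij} \widehat{p}_j \;=\; \sum_j \widehat{p}_j \sum_i y_{ij} \;=\; \sum_j \widehat{p}_j.
\]
On the other hand, $\widehat{b}_i = \sum_j x_{ij} \widehat{p}_j$ by construction of the supporting budgets, and $x$ also fully allocates every item (same reasoning), so $\sum_i \widehat{b}_i = \sum_j \widehat{p}_j$ as well. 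Hence $\sum_i \bigl(\sum_j y_{ij}\widehat{p}_j - \widehat{b}_i\bigr)=0$, and since every summand is nonnegative, each vanishes. Thus $y_i$ costs exactly $\widehat{b}_i$ and attains the maximum utility $u_i$ under budget $\widehat{b}_i$, so $y_i$ lies in $i$'s demand set.

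Finally, I would dispose of the degenerate case: agents $i$ with $\widehat{b}_i=0$ had $u_i=0$ in $x$, and utility preservation gives $v_i(y_i)=0$; since all $v_{ij}>0$, this forces $y_i\equiv 0$, which is trivially in the demand set at zero budget. There is no real obstacle here: once the per-item inequality $v_{ij}\le \rho_i \widehat{p}_j$ is in hand (which is exactly what the previous sections were designed to give us), the whole argument is just summing the inequality across agents and matching totals on both sides. The only point that requires care is to recognize that the inter-tree direction of this inequality is not automatic from the tree-pricing construction alone and relies crucially on the fixed-point property $GAIN_T=0$ established in Section~\ref{sec:fixed-existence}.
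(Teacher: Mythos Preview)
Your proposal is correct and follows essentially the same global accounting argument as the paper: both establish $\sum_j y_{ij}\widehat{p}_j \ge \widehat{b}_i$ for every agent, sum over agents using full allocation of items under both $x$ and $y$ to get $\sum_i \sum_j y_{ij}\widehat{p}_j = \sum_j \widehat{p}_j = \sum_i \widehat{b}_i$, and conclude equality agent by agent.

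The only real difference is in how the per-agent inequality is obtained. The paper argues it by contradiction: if $y_i$ cost strictly less than $\widehat{b}_i$, then since $v_i(y_i)=u_i$ (utility preservation), agent $i$ could buy $y_i$ and spend the leftover on something of positive value, beating $u_i$ and contradicting that $x_i$ is a best response. You instead invoke the bang-per-buck bound $v_{ij}\le \rho_i\,\widehat{p}_j$ directly (citing \Cref{lem:devtree} for same-tree items and the fixed-point conclusion $GAIN_T=0$ for cross-tree items), and then sum termwise against $y_{ij}$. Your route is slightly more explicit about which earlier results are being used and also spells out the demand-set conclusion and the degenerate zero-budget case, which the paper leaves implicit; the paper's contradiction step is a bit quicker but hides the dependence on the bang-per-buck structure. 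Substantively the two arguments are the same.
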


\begin{proof}
	Let $y$ be an arbitrary Pareto-optimal allocation, and let $x$ be the cycle-free Pareto-optimal allocation ensured by Proposition~\ref{prop:cycle-free-pareto}, which has the same agent utilities as in $y$. 
	%Let $\widehat{p}_1, \ldots, \widehat{p}_m$ be the [adjusted] item prices, and $\widehat{b}_1, \ldots, \widehat{b}_n$ be the [adjusted] budget endowments. 
	%For now, assume $x$ is a zero-free allocation.
	
	The total payment of agent $i$ in $x$ is $\widehat{b}_i$, as every agent uses her entire budget, by construction. %% $p^i(y) = \sum_{j=1}^{m}y_{i,j}\widehat{p}_j$
	If agent $i$ were to purchase $y_i$ using the prices $\{\widehat{p}_j\}$, her total payment would be $\sum_{j=1}^{m}y_{i,j}\widehat{p}_j$.
	We now show that this sum is also equal to agent $i$'s budget, $\widehat{b}_i$.
	Since all budgets are fully utilized in allocation $x$, and in any Pareto-optimal allocation all items are fully allocated, it follows that
	$$\sum_{i=1}^{n}\widehat{b}_i = \sum_{i=1}^{n} \sum_{j=1}^{m}x_{i,j}\widehat{p}_j = \sum_{j=1}^{m} \widehat{p}_j = \sum_{i=1}^{n} \sum_{j=1}^{m}y_{i,j}\widehat{p}_j.$$
	
	Consider an agent $i$ for which $\sum_{j=1}^{m}y_{i,j}\widehat{p}_j < \widehat{b}_i$.
	We argue that $x_i$ is not a best response to prices $\{\widehat{p}_j\}$.
	Indeed, agent $i$ can gain a higher utility by choosing $y_i$ and utilizing the remaining budget to improve her utility. Agent $i$ can do this because $y_i$ is not the entire set of items $I$, otherwise $y$ would be cycle-free.
	%Note: we can extend this to zero valuations, but it needs more work
	
	It follows that for every agent $i$, $\sum_{j=1}^{m}y_{i,j}\widehat{p}_j \geq \widehat{b}_i$.
	As $\sum_{i=1}^{n}\widehat{b}_i = \sum_{i=1}^{n} \sum_{j=1}^{m}y_{i,j}\widehat{p}_j$, we have that $\sum_{j=1}^{m}y_{i,j}\widehat{p}_j = \widehat{b}_i$.
	
	%Finally, if $x$ is not zero-free and was converted to a zero-free allocation $x'$ before computing the cycle-free allocation $y$, then the prices and budgets computed for $y$, which also support $x'$, should also support $x$. The only difference between $x$ and $x'$ are allocations of items which have zero valuations from all agents, and therefore should have their prices set to zero, meaning that they are affordable with the same budget.
	
	%Now, if there is an agent which needs more tokens to get the
	%original allocation, there is another agent that has a surplus! The
	%agent that has a surplus is not best responding in the modified
	%allocation. It has a deviation in which it switches to the original
	%allocation and uses the surplus tokens to get additional utility.
	%The main point (crucial!) is that this agent has the same utility on
	%the original and modifies allocation! This implies that a slight
	%additional amount of tokens wil strictly increase its utility above
	%what it was in the modified allocation!
	
\end{proof}

%We somehow did not use the fact that the profile of utilities did
%not change. This should be critical! Clearly we do not support ANY
%Pareto optimal allocation! Think of the case of a single item that
%two agents has value 10 for it. We can price the item at 10 and give
%agent 1 budget 1 and agent 2 budget 9. Now consider a different
%Pareto optimal allocation

In conclusion:

\begin{proposition}
	For any Pareto optimal allocation there exist item prices and agent budgets (in tokens) supporting this allocation. Such prices and budgets can be computed efficiently.
\end{proposition}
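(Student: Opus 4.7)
The plan is to simply assemble the ingredients already proved in the preceding subsections; this final proposition is the capstone statement and all the heavy lifting has been done. First, given an arbitrary Pareto optimal allocation $y$, I would invoke \Cref{prop:cycle-free-pareto} to extract a cycle-free Pareto optimal allocation $x$ with the same per-agent utilities as $y$. This step is constructive and runs in polynomial time, as the number of iterations is bounded by the number of edges in $G(y)$ and each iteration finds and breaks a single cycle.

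Next, I would apply the preceding proposition (the cycle-free version) to $x$: this gives item prices $p$, scaling multipliers $\alpha_T$ (via the linear program), and agent budgets $b$ that support $x$ as a competitive equilibrium. The existence of the $\alpha_T$ is guaranteed by the Brouwer fixed-point argument in \Cref{sec:fixed-existence} (specifically \Cref{prop:alphas,lemma:Brouwer-zero,claim:Brouwer-cycle}), and the no-inter-tree-deviation property follows from \Cref{lem:paretodominance}. The no-intra-tree-deviation property follows from \Cref{lem:devtree}. Computationally, the tree-by-tree pricing takes time linear in $|G(x)|$, and the multipliers are obtained by solving the linear program exhibited at the end of \Cref{sec:price-forest}, hence polynomial time overall.

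Finally, to transfer the equilibrium from $x$ back to $y$, I would appeal directly to \Cref{thm:origalloc}, which establishes that the adjusted prices $\widehat{p}$ and adjusted budgets $\widehat{b}$ supporting $x$ also support $y$: every agent spends her full budget under $y$ at these prices, and since $y$ provides the same utility as $x$ while $x$ is a best response, $y$ is also a best response. Combining the three steps yields both the existence claim and the efficient computation claim, completing the proof.

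Since every component is already stated and proved, I do not foresee any genuine obstacle; the only thing to be careful about is to note explicitly that the equilibrium prices/budgets one returns are the adjusted ones $(\widehat{p}, \widehat{b})$, and that the polynomial-time bound is the sum of (i) the cycle-elimination time, (ii) the tree pricing time, and (iii) the time to solve the linear program producing the $\alpha_T$'s. No new lemmas are needed.
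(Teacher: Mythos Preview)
Your proposal is correct and mirrors the paper's own argument exactly: the proposition is stated as a direct conclusion obtained by chaining \Cref{prop:cycle-free-pareto}, the cycle-free proposition (tree pricing plus the LP for the multipliers), and \Cref{thm:origalloc}. No additional ideas are needed, and your remarks about returning the adjusted prices/budgets and summing the running times are precisely the right bookkeeping.
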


\section{Max-Min Allocations}
\label{sec:maxmin}

%\afnote{Michal and I made changes to this section,  but are too lazy to mark them all. If possible, please just read through.}

Our pricing and budgeting mechanism assumes a desired Pareto optimal
allocation is already given\footnote{As before we assume that for every agent $i$ and item $l$ the valuation of agent $i$ for item $l$ is strictly greater than zero.}. 

One such allocation of interest is the max-min allocation, for which we can
compute the allocation efficiently. For the special case of two agents (Section \ref{two agents}) or two items (Section \ref{subsub:twoitems}) the algorithms become particularly simple and efficient. 

\subsection{Definition and efficient computation}

For a given an allocation $x$, the minimal agent valuation 
$W_{\min}(x)$ is
\[
W_{\min}(x)=\min_{i} \left\{\sum_j x_{i,j}v_{i,j}\right\}
\]

A max-min
allocation maximizes the minimal agent valuation.
For our setting of divisible items and additive valuations, it can
be computed efficiently. Specifically,
a max-min allocation is one that solves the following LP
maximization problem:
\begin{align}
    \max \lambda&\label{eq:lpmaxmin}\\
    \mbox{such that}\qquad&\nonumber\\
    \forall i\in A \qquad &\lambda \leq \sum_{j \in I} x_{i,j} v_{i,j} & \mbox{valuation for agent } i\nonumber\\
    \forall j\in I \qquad& \sum_{i\in A} x_{i,j}\leq 1 &\mbox{capacity per item}\nonumber\\
    &x_{i,j}\geq 0 \nonumber
\end{align}

\subsection{Properties of max-min allocations}

%Although max-min allocations are not necessarily Pareto optimal, we can guarantee this property when all valuations $v_{i,j}$ are strictly positive.

We first prove that in a max-min allocations, all agents are
guaranteed to have equal valuations.
%(this is mainly due to our assumption that all valuations are non zero.)

\begin{lemma}
    \label{lem:equal-values}
    %If $\forall i \in A, j \in I, v_{i,j}>0$, then f
    For any max-min allocation $x$, $\forall i, i' \in A, v_i(x_i) = v_{i'}(x_{i'})$.
\end{lemma}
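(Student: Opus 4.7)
The plan is to proceed by contradiction: suppose $x$ is a max-min allocation but the agent valuations are not all equal. Then the agent set partitions as $A = L \sqcup H$, where $L = \{i \in A : v_i(x_i) = W_{\min}(x)\}$ and $H = \{i \in A : v_i(x_i) > W_{\min}(x)\}$, with both $L$ and $H$ non-empty. My goal is to exhibit a feasible allocation $x'$ with $W_{\min}(x') > W_{\min}(x)$, contradicting optimality of $x$ in the LP \eqref{eq:lpmaxmin}.

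The construction is to redistribute a small sliver of a single item from one ``rich'' agent to every ``poor'' agent simultaneously. Fix any $h \in H$. Since all $v_{i,j}$ are strictly positive, every $v_i(x_i)$ is nonnegative and hence $W_{\min}(x) \geq 0$; in particular $v_h(x_h) > 0$, so $x_{h,j} > 0$ for at least one item $j$. For a parameter $\epsilon > 0$ to be chosen small, define $x'$ by setting $x'_{\ell,j} = x_{\ell,j} + \epsilon$ for every $\ell \in L$, $x'_{h,j} = x_{h,j} - |L|\epsilon$, and leaving every other entry of $x$ unchanged.

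Three checks then finish the argument and are all routine for sufficiently small $\epsilon$. Feasibility holds because the column sum on item $j$ is preserved, $x'_{h,j} \geq 0$ (using $x_{h,j} > 0$), and each $x'_{\ell,j} \leq 1$ (note $x_{\ell,j} < 1$ since $x_{h,j} > 0$). Each $\ell \in L$ gains exactly $\epsilon v_{\ell,j} > 0$, pushing every member of $L$ strictly above $W_{\min}(x)$. Agent $h$ loses $|L|\epsilon v_{h,j}$, but $h$ began with strictly positive slack $v_h(x_h) - W_{\min}(x)$, so any small enough $\epsilon$ keeps $v_h(x'_h) > W_{\min}(x)$; all other agents are untouched. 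Hence $W_{\min}(x') > W_{\min}(x)$, contradicting max-minimality. The only place the standing assumption $v_{i,j} > 0$ is actually used is in producing a donor item for $h$ and in making each recipient's gain strictly positive; beyond this I do not anticipate any real obstacle.
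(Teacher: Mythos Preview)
Your proof is correct and follows essentially the same approach as the paper's: pick an agent strictly above the minimum, find an item it holds with positive share, and redistribute a small $\epsilon$-sliver of that item to raise the minimum, using $v_{i,j}>0$ to guarantee strict gains. The only cosmetic difference is that the paper redistributes the sliver equally among \emph{all} other agents (not just the minimum-achievers), but the logic and the role of the standing positivity assumption are identical.
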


\begin{proof}
    Proof by contradiction, assume that an allocation $x$ exists such that $x$ is a max-min allocation but not all valuations $v_i(x_i)$ are equal. Therefore, there exists some agent $k$ whose valuation $v_k(x_k)$ is not the minimal valuation. It follows that agent $k$ must have a non-zero valuation, and thus, there exists at least one item $l$ such that $x_{k,l}>0$.

    We now construct a new allocation $\hat{x}$ with minimal agent valuation greater than that of the minimal agent valuation in allocation $x$, thus contradicting the assumption that $x$ is a max-min allocation.
    For every agent $i$ and item $j \neq l$, we set $\hat{x}_{i,j} = x_{i,j}$. We set $\hat{x}_{k,l} = x_{k,l} - \epsilon$, where $\epsilon$ is sufficiently small such that $v_k(\hat{x}_k)$ is still strictly larger than the minimum agent valuation in allocation $x$.

    For every agent $i \neq k$ we set $\hat{x}_{i,l} = x_{i,l} + \frac{\epsilon}{n-1}$. Since all valuations $v_{i,l}$ are non-zero, for every agent $i \neq k$, $v_i(\hat{x}_i) > v_i(x_i)$. Therefore, the minimal agent valuation in allocation $\hat{x}$ is strictly larger than that of allocation $x$, in contradiction to the assumption that $x$ is a max-min allocation. 
\end{proof}

%\ymnote{I think showing that max-min is Pareto is redundant, and would delete the following lemma}
We now relate max-min allocations to Pareto
optimal allocations as follows.
%We can now induce the following:

\begin{lemma}
    \label{lem:max-min is pareto}
%   If $\forall i \in A, j \in I, v_{i,j}>0$, then a
An allocation $x$ is a max-min allocation if and only if it is  Pareto optimal and all agents have the same valuation.
\end{lemma}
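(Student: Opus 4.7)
The plan is to prove the two directions separately, leveraging the previous Lemma (equal-values) for the common-valuation part.

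For the forward direction, suppose $x$ is a max-min allocation. By Lemma \ref{lem:equal-values} all agents share a common valuation, call it $W = v_i(x_i)$ for every $i$. It remains to show Pareto optimality. Suppose, for contradiction, that some allocation $x'$ weakly improves every agent and strictly improves some agent $k$, so $v_i(x'_i) \geq W$ for all $i$ and $v_k(x'_k) > W$. Since $v_k(x'_k) > W$ and all valuations $v_{k,j}$ are strictly positive, there must exist an item $l$ with $x'_{k,l} > 0$ and enough slack that a small decrease $x'_{k,l} \to x'_{k,l} - \epsilon$ still leaves $v_k(\cdot) > W$. Redistribute this mass $\epsilon$ equally among the other $n-1$ agents (as in the proof of Lemma \ref{lem:equal-values}). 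Because all valuations $v_{i,l}$ are strictly positive, every other agent strictly gains, so the resulting allocation $\hat{x}$ satisfies $W_{\min}(\hat{x}) > W = W_{\min}(x)$, contradicting the assumption that $x$ is max-min.

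For the backward direction, suppose $x$ is Pareto optimal and every agent has the same valuation $W$ under $x$. If $x$ were not max-min, there would exist some allocation $x^*$ with $W_{\min}(x^*) > W$, which means $v_i(x^*_i) > W = v_i(x_i)$ for every agent $i$. But then $x^*$ strictly Pareto dominates $x$, contradicting the Pareto optimality of $x$. Hence $x$ is max-min.

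The only non-routine step is the forward direction, and the main obstacle there is that Pareto dominance alone does not automatically yield a higher minimum; the redistribution argument (using the strict positivity of valuations and the fact that some agent has surplus above $W$) is precisely what upgrades a Pareto improvement into a strict max-min improvement. Once this observation is in place, both directions follow in a few lines by combining Lemma \ref{lem:equal-values} with the definition of Pareto optimality.
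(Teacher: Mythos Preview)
Your proof is correct. The backward direction is identical to the paper's. For the forward direction, the paper takes a slightly slicker route: once a Pareto-dominating allocation $z$ is assumed, it observes that $W_{\min}(z)\geq W_{\min}(x)$, hence $z$ is itself max-min, and then applies Lemma~\ref{lem:equal-values} to $z$ to conclude all agents have equal valuation in $z$---which must then strictly exceed $W$, contradicting max-min of $x$. You instead re-run the redistribution construction from the proof of Lemma~\ref{lem:equal-values} on $x'$ to manufacture an allocation with strictly higher minimum. Both arguments are valid; the paper's version avoids repeating the $\epsilon$-shuffling by invoking the earlier lemma a second time, while yours is more self-contained.
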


\begin{proof}
Let $x$ be a max-min allocation. By Lemma \ref{lem:equal-values} all agent valuations are equal, so we need to show that $x$ is Pareto optimal.
For contradiction, assume $x$ is a max-min allocation but not Pareto optimal. Therefore, there exists some allocation $z$ that Pareto dominates $x$.

By Lemma \ref{lem:equal-values}, all agents in $x$ have equal valuations. Since $z$ Pareto dominates $x$, the minimal agent valuation in allocation $z$ must be at least the same as the minimal agent valuation in allocation $x$, and therefore $z$ is also a max-min allocation.

Since $z$ Pareto dominates $x$, at least one agent has a higher valuation than the same agent in $x$. However, by Lemma \ref{lem:equal-values}, all agents in $z$ also have equal valuations, thus all agents in $z$ have greater valuations, in contradiction to the assumption that $x$ is a max-min allocation.

To show the other derivation, assume that $x$ is both Pareto optimal and the agents have an equal valuations. For contradiction assume that $x$ is not a max-min allocation. Then there is another allocation $x'$ which maximizes the min agent valuation. By Lemma \ref{lem:equal-values} all the agents have the same valuation in $x'$. This implies that all agents improve by moving from $x$ to $x'$, contradicting the Pareto optimality of $x$.
\end{proof}

Since we can compute a max-min allocation efficiently via linear
programming, by \Cref{thm:origalloc} we have:

\begin{theorem}
%   If $\forall i \in A, j \in I, v_{i,j}>0$, then
A max-min allocation can be efficiently computed along with item
prices and agent budgets such that the max-min allocation is in
the best response of each agent.
\end{theorem}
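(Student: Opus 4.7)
The plan is to chain together the three main ingredients already established in the paper. First, I would compute a max-min allocation $x^*$ by directly solving the linear program in \eqref{eq:lpmaxmin}. This LP has $nm+1$ variables and $O(n+m)$ nontrivial constraints, so it can be solved in polynomial time via any standard LP solver. The optimal solution $x^*$ (together with the associated value $\lambda^*$) is, by definition, a max-min allocation.

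Next, I would invoke \Cref{lem:max-min is pareto} to conclude that $x^*$ is Pareto optimal. This step requires no new work: \Cref{lem:equal-values} already established that in any max-min allocation all agent valuations are equal, and \Cref{lem:max-min is pareto} then gives the equivalence with Pareto optimality under the standing assumption that all $v_{i,j}>0$.

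Having a Pareto optimal allocation in hand, I would then apply the Proposition concluding \Cref{sec:pricingbudgets}, which states that for any Pareto optimal allocation $y$ there exist item prices $p$ and agent budgets $b$ supporting $y$, and these can be computed efficiently. Concretely, the procedure is: run the cycle-elimination step of \Cref{prop:cycle-free-pareto} on $x^*$ to obtain a cycle-free Pareto optimal allocation $x$ with identical agent utilities; compute within-tree prices and budgets as in \Cref{sec:price-tree}; solve the linear program of \Cref{sec:price-forest} for the tree multipliers $\alpha_T$ and rescale to obtain adjusted prices $\widehat p$ and budgets $\widehat b$; and finally appeal to \Cref{thm:origalloc} to argue that $(\widehat p,\widehat b)$ supports the original max-min allocation $x^*$ (not only the cycle-free reduction $x$). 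Each of these stages runs in polynomial time: cycle elimination removes at least one edge per iteration, tree pricing is linear in the forest, and the fixed-point computation is an LP.

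Putting these ingredients together, $x^*$ is in the demand correspondence of every agent $i$ under prices $\widehat p$ and budget $\widehat b_i$, which is exactly the statement that $x^*$ is a best response for each agent. There is no real obstacle to overcome here beyond composition; the only subtle point worth checking is that the Pareto optimality of $x^*$ is genuinely used in the final stage (it is what drives the contradiction arguments in \Cref{lem:paretodominance} and \Cref{thm:origalloc}), so one should remark explicitly that the LP solver returns a vertex optimum that is Pareto optimal, which is precisely what \Cref{lem:max-min is pareto} guarantees.
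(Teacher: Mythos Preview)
Your proposal is correct and follows essentially the same approach as the paper: the paper's own justification is the single sentence ``Since we can compute a max-min allocation efficiently via linear programming, by \Cref{thm:origalloc} we have'' immediately preceding the theorem, which is exactly the chain you spell out (LP~\eqref{eq:lpmaxmin} $\to$ \Cref{lem:max-min is pareto} $\to$ the pricing machinery of \Cref{sec:pricingbudgets}). Your remark about needing a vertex optimum is unnecessary, since \Cref{lem:max-min is pareto} shows that \emph{every} max-min allocation is Pareto optimal, not just basic feasible solutions.
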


\subsection{Simple, fast max-min allocation algorithms for special cases}

In this section we show that for the case of two agents or two
items, we can have simpler and faster algorithms.

%The computation of the max-min allocation, as well the item pricing scheme require solving linear programs, which although efficient, may still be time consuming. If the number of agents or items is small, it is possible to compute the allocation, prices and budgets more efficiently.

\subsubsection{Two agents, Multiple items} \label{two agents}

The idea is to sort the items based on the relative preference of
the two agents, and allocate each agent its more favorable items.
We construct an allocation where both agents have
the same valuation. Specifically, this is done as follows.
For  each item $j$ define $\phi_j=v_{2,j}/v_{1,j}$, which is the
relative preference for item $j$. Sort the items by $\phi_j$. Let
$\pi$ be a permutation of $1,\ldots,m$ such that $\phi_{\pi_j} \leq
\phi_{\pi_{j+1}}$ for $j=1,\ldots, m-1$. For simplicity we assume
that $\pi_j = j$.

Agent $1$ is assigned items in the order $\phi_{1},
\phi_{2}, \ldots$ whereas agent $2$ is assigned items in the reverse order, i.e.,
$\phi_{m}, \phi_{m-1}, \ldots$. We simultaneously increase the
allocations for both agents, so that the valuations to the two
agents are equal, until the two thresholds meet, producing a
partition of the items. The infinitesimal rate at which the value
for agent $1$ increases is set to be equal to the infinitesimal rate
at which the value for agent $2$ increases.
%
%{\sl I.e.}
Namely, we find a ``median" item: an index $s$ such that
$\sum_{j=1}^s v_{1,j} \geq \sum_{j=s+1}^m v_{2,j}$ and
$\sum_{j=1}^{s-1} v_{1,i} \leq \sum_{j=s}^m v_{2,j}$.

Except for item $s$  all other items are assigned
in their entirety to one of the two agents.  Item $s$ may be partitioned amongst the two agents. Let this 
%{\sl max min}
allocation be denoted $x_{i,j}$, where $x_i =
x_{i,1}, x_{i,2}, \ldots, x_{i,m}$ for
$i=1,2$. Note that
$$
x_{1}= 1,1,\ldots,1, x_{1,s}, 0,\ldots,0, \mbox{\rm\
whereas\ }x_{2}= 0,0,\ldots,0, 1-x_{1,s}, 1,\ldots,1.
$$
where $x_{1,s}$ satisfies the identity $v_{1,s}x_{1,s}+\sum_{j=1}^{s-1}v_{1,j}=v_{2,s}(1-x_{1,s})+\sum_{j=s+1}^{m}v_{2,j}$.
(It is easy to verify that indeed $x_{1,s}\in[0,1]$.)

As every item goes to the agent with higher value for the item such an allocation is Pareto optimal. Additionally, the allocation gives identical valuation to both agents. By \cref{lem:max-min is pareto}, it follows that the allocation is a max-min allocation. Moreover, finding the ``median" item can be done via sorting in $O(m\log m)$ time, this bound dominates the complexity of finding the allocation.
%A simple algorithm for computing such a max min allocation, $x$, is to search for some $s$ such that item $s$ can be partitioned in such a way that the value of the allocation for agent $1$ is equal to the value of the allocation for agent $2$.

We now set the prices and budgets. Item $j$ is priced at
$p_j=v_{1,j}$. The budget for agent $1$ is set to $b_1=\sum_j
x_{1,j} v_{1,j}$.

We now argue that  $x_{1}$ is in the demand set of agent
$1$. The value per token for agent $1$ is equal across all items,
therefore the agent is indifferent as to which items are chosen, as
long as the prices add up to $b_1$ and may as well be happy with the
items with values in increasing order of $v_{2,j}/v_{1,j}$.

The budget for agent $2$, is set to be $b_2=\sum_j p_j x_{2,j} =
\sum_j v_{1,j} x_{2,j}$. Again, we need to show that $x_2$
is in the demand set of agent $2$. Agent $2$ greedily prefers to use
her budget on items with the largest $v_{2,j}/p_j$.  She is assigned
the items in order of decreasing $v_{2,j}/p_j=v_{2,j}/v_{1,j}$.
Ergo, $x_{2}$ is in the demand set of agent $2$.

We have established the following theorem, for the case of two
agents.
\begin{theorem}
%   If $\forall i \in A, j \in I, v_{i,j}>0$, then
For two agents, a max-min allocation can be computed in $O(m\log m)$
time, along with item prices and agents' budgets such that the
max-min allocation is in the best response of each agent.
\end{theorem}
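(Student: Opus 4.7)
The plan is to construct the max-min allocation explicitly by sorting items, then set anonymous prices proportional to one agent's valuations, and finally verify the demand-set conditions. I would begin by defining $\phi_j = v_{2,j}/v_{1,j}$ and sorting the items so that $\phi_1 \le \phi_2 \le \cdots \le \phi_m$ (relabeling if needed). This takes $O(m\log m)$ time and will dominate the overall runtime.

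Next I would identify the ``median'' index $s$ by sweeping a threshold from left to right: define $L(t) = \sum_{j\le t} v_{1,j}$ (agent $1$'s value if given items $1,\dots,t$) and $R(t) = \sum_{j>t} v_{2,j}$ (agent $2$'s value if given the remainder). Since $L$ is non-decreasing, $R$ is non-increasing, $L(0)=0 \le R(0)$, and $L(m) \ge R(m)=0$, there is an index $s$ where the orders switch, and a continuity argument on item $s$ produces a fraction $x_{1,s}\in[0,1]$ such that agent $1$'s total value equals agent $2$'s total value. This construction is purely computational and runs in $O(m)$ time after sorting.

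I would then show this allocation is max-min. By construction the two agents have identical valuations, so by \Cref{lem:max-min is pareto} it suffices to verify Pareto optimality. Any Pareto improvement would have to reallocate some item $j$ assigned to agent $1$ (hence $\phi_j \le \phi_s$) against some item $j'$ assigned to agent $2$ (hence $\phi_{j'} \ge \phi_s$); but such a swap changes the two valuations in opposite directions, contradicting a strict improvement for one side without harming the other. For the pricing, I would set $p_j = v_{1,j}$, $b_1 = \sum_j x_{1,j} v_{1,j}$, and $b_2 = \sum_j x_{2,j} v_{1,j}$. Agent $1$ has $v_{1,j}/p_j = 1$ for every item, so she is indifferent among all bundles that exhaust her budget, and $x_1$ is in her demand set. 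Agent $2$ has bang-per-token $v_{2,j}/p_j = \phi_j$, which is greedily maximized by acquiring items in decreasing order of $\phi_j$; by construction her allocation is exactly a greedy top-$\phi$ bundle costing $b_2$, so $x_2$ is in her demand set.

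The main obstacle I anticipate is the demand-set verification for agent $2$: because the prices are set using agent $1$'s valuations rather than agent $2$'s, one has to argue carefully that agent $2$'s greedy preference ordering (by $\phi_j$) aligns with the partition induced by the threshold. Handling ties in $\phi_j$ and the boundary case where the split fraction is $0$ or $1$ (i.e., no item is actually partitioned) will require some care but is routine. Everything else---the sort, the threshold sweep, the prices, and the budgets---is linear in $m$ after the initial sort, yielding the claimed $O(m\log m)$ overall complexity.
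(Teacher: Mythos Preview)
Your proposal is correct and follows essentially the same approach as the paper: sort items by $\phi_j=v_{2,j}/v_{1,j}$, locate the threshold item $s$ where a fractional split equalizes the two agents' values, invoke \Cref{lem:max-min is pareto} to conclude the allocation is max-min, set $p_j=v_{1,j}$ and the corresponding budgets, and verify the demand sets exactly as you describe (agent~$1$ indifferent, agent~$2$ greedy in decreasing $\phi_j$). The only minor difference is that your Pareto-optimality justification via swaps is slightly more explicit than the paper's one-line claim, but the structure and all key steps coincide.
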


\noindent{\bf Remark:} The pricing and budgets above are not unique for max-min allocations. There are alternative prices and budgets that
would work equally well. For example, the pricing
$p_j=\max(v_{1,j},v_{2,j})$ also works as well as  other pricing
schemes.

\subsubsection{ Multiple agents, two items}
\label{subsub:twoitems}
We now consider the case of two items with multiple agents. The idea
is rather similar to the two agents case. Each agent has a relative
preference between the two items. We will allocate fractions of each
item according to agent relative preferences. Each agent will
get a fraction of one of the two items, except for potentially one
agent which will receive a fraction from both items.

We define $\rho_i=v_{i,1}/v_{i,2}$, which is the relative preference
of agent $i$ between the two items. We sort the agents in decreasing
order of $\rho_i=v_{i,1}/v_{i,2}$. Similarly to subsection \ref{two
agents}, without loss of generality we assume the indices are
already ordered, i.e. $\rho_i\leq \rho_{i+1}$, $i=1,\ldots,n$.

The goal is to assign fractions of item $1$ to agents with larger
$\rho_i$ and fractions of item $2$ to agents with smaller $\rho_i$.
Agents will get equal value from their shares. There may be some
agent, agent $k$, that gets non zero fractions of both items.

To compute the allocation we define the following constraints.
For agent $i$, define variables $x_{i,1}, x_{i,2}$, $i=1,\ldots,n$, which are the
fractions that agent $i$ receives from item $1$ and $2$,
respectively. We seek a value of $k$ and a solution to the following
equations:
\begin{eqnarray}
                    x_{i,j}&\geq& 0; \qquad i=1,\ldots,n; \quad j=1,2 \label{eq:init1} \\
                    x_{1,2}=\cdots = x_{k-1,2} = x_{k+1,1}=\cdots =x_{n,1}&=& 0 \label{eq:init2} \\
                  x_{1,1}+x_{2,1}+\cdots+x_{k-1,1} + x_{k,1} &=& 1 \label{eq:equal1} \\
                                   x_{k,2}+x_{k+1,2}+\cdots+x_{n,2} &=& 1 \label{eq:equal2}\\
                                  x_{1,1}v_{1,1} = x_{2,1}v_{2,1} = \cdots =x_{k-1,1}v_{k-1,1} &=& x_{k,1}v_{k,1} + x_{k,2}v_{k,2}\label{eq:subs}\\ &=& x_{k+1,2}v_{k+1,2} = x_{k+2,2}v_{k+2,2} = \cdots =x_{n,2}v_{n,2}\nonumber
\end{eqnarray}
%The linear program defined by Equations (\ref{eq:init1}) -- (\ref{eq:equal2}) defines all of the Pareto optimalallocations. 
An allocation that satisfies Equations (\ref{eq:init1}) -- (\ref{eq:equal2}) is Pareto optimal since each agent receives a
fraction of her ``preferred'' item (note that the agent may really prefer the other item, but relatively less so than the other agents), and both items are fully allocated. Equation (\ref{eq:subs}) constrains all agents to have equal valuations. Thus, a solution to Equations (\ref{eq:init1}) -- (\ref{eq:subs}) is a Pareto optimal allocation in which all agent valuations are equal.

We now argue that such a solution exists.  By Lemma \ref{lem:max-min is pareto} such a solution is a max-min allocation, and we know that a max-min allocation exists as a solution to the linear program (\ref{eq:lpmaxmin}). It follows that there must be some value of $1\leq k\leq n$ so that Equations (\ref{eq:init1}) -- (\ref{eq:subs}) are satisfied.

\noindent {\bf Computational Efficency.} 
At first sight (\ref{eq:init1}, \ref{eq:init2},
\ref{eq:equal1}, \ref{eq:equal2})  might seem a complicated linear program. However, note
that once we fixed $k$, we can express all the variables using
$x_{k,1}$ and $x_{k,2}$. Namely, $x_{i,1}=(x_{k,1}v_{k,1} +
x_{k,2}v_{k,2})/v_{i,1}$, for $i\leq k-1$, and
$x_{i,2}=(x_{k,1}v_{k,1} + x_{k,2}v_{k,2})/v_{i,2}$, for $i\geq
k+1$. So essentially the identities (\ref{eq:equal1}) and
(\ref{eq:equal2}), have only two unknown, $x_{k,1}$ and $x_{k,2}$.
We solve the two equations with two unknown and verify that they are
non-negative, i.e., inequality (\ref{eq:init1}) holds. If it does,
we are done. This implies that for a given $k$, and a given order,
we can solve  (\ref{eq:init1}) -- (\ref{eq:equal2}) in linear time.

%\ymnote{Need to change the following, but not sure how we want to search over $k$} 
Instead of solving the linear program directly for
each possible value of $k$, one can directly solve equations
(\ref{eq:init2}) -- (\ref{eq:subs}),
doing a binary search for a value of $k$ for which a feasible solution exists
that also satisfies the constraints in (\ref{eq:init1}). When computing a given value of $k$, it cannot be the case that both $x_{k,1}$ and $x_{k,2}$ are negative. If $x_{k,1} < 0$, then $\sum_{i=1}^{k-1}x_{i,1}>1$, meaning the values of item $1$ for the first $k-1$ agents are insufficient to satisfy (\ref{eq:subs}) by any feasible partition, and some of them should switch to item $2$. The same holds for agents $[k+1,\ldots,n]$ and item $2$ if $x_{k,2} < 0$. Therefore, we need to search only over the part which has a negative value, which limits the search to $O(\log n)$ iterations. 
%\ymnote{Can we do a binary search or do we need to enumerate all $n$ values of $k$? If we can do a binary search, this will utilize that one of $k_{k,1}$ or $x_{k,2}$ is negative, and discard half of the search. I do not see exactly how.}

% Using Equation \ref{eq:subs} and  we derive that $$x_{1,1}=\frac{ x_{k,1}v_{k,1} + x_{k,2}v_{k,2}}{v_{1,1}},  \ldots, x_{k-1,1}=\frac{ x_{k,1}v_{k,1} + x_{k,2}v_{k,2}}{v_{k-1,1}},$$ and $$x_{k+1,1}=\frac{ x_{k,1}v_{k,1} + x_{k,2}v_{k,2}}{v_{k+1,1}},  \ldots, x_{n,1}=\frac{ x_{k,1}v_{k,1} + x_{k,2}v_{k,2}}{v_{n,1}}.$$
%
% Substituting these equalities into equations \ref{eq:equal1} and \ref{eq:equal2} we get two equations and two variables, and can solve for $x_{k,1}$ and $x_{k,2}$. This defines all $x_{i,j}$ and we can check that the conditions hold.
% It remains to show that there exists some such $k$.

Given such a solution, we set $p_1=v_{k,1}$ and $p_2=v_{k,2}$. This
means that agent $k$ is indifferent between equal prized parts of
items $1$ and $2$, whereas all agents $i<k$ prefer equal prized parts of
item $1$ over the same prized parts of item $2$, and vice-versa for
agents $i>k$.

Agents $i \in \{1,\ldots,k-1\}$ receive budget $b_i = x_{i, 1} p_1$.
Agents $i \in \{k+1,\ldots,n\}$ receive budget $b_i = x_{i, 2} p_2$. The
$k$-th agent's budget is $b_k = x_{k, 1} p_1 + x_{k, 2}
 p_2$.
All agents $i<k$ can afford an $x_{i,1}$ fraction of item 1, the
item does not run out, and this is in their demand set. Likewise for
agents $i>k$ (with respect to item 2). Agent $k$ gets the leftovers
from both items, a $x_{k,1}$ fraction of item $1$ and a $x_{k,2}$ fraction
of item $2$, which is in her demand set.

We have established the following theorem.

\begin{theorem}
%   If $\forall i \in A, j \in I, v_{i,j}>0$, then
For two agents a max-min allocation can be computed in $O(n\log n)$
%\ymnote{or $O(n^2)$ if we cannot do a binary search over $k$} 
time,
along with item prices and agents' budgets in tokens such that the allocation
is in the best response of each agent.
\end{theorem}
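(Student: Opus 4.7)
My plan is to verify that the construction described in Section~\ref{subsub:twoitems} actually produces a max-min allocation with supporting prices and budgets, and that it can be carried out in $O(n\log n)$ time (the theorem statement ``two agents'' appears to be a typo for ``two items,'' since the section concerns $n$ agents and two items).

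First, I would establish that the allocation $x$ satisfying Equations~(\ref{eq:init1})--(\ref{eq:subs}) is max-min. By Equation~(\ref{eq:subs}) all agent valuations are equal, so by Lemma~\ref{lem:max-min is pareto} it suffices to show Pareto optimality. Sorted by $\rho_i=v_{i,1}/v_{i,2}$, the agents that receive only item~1 are precisely those with the largest relative preference for item~1, and symmetrically for item~2, with a unique split agent $k$ possibly holding fractions of both. Any attempted Pareto-improving transfer of item~1 from an agent $i<k$ to an agent $i'>k$ (or symmetrically for item~2) would weakly decrease the total value at item~$j$, because $v_{i,1}/v_{i,2}\geq v_{i',1}/v_{i',2}$, so no such improvement can simultaneously help both parties. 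Equation~(\ref{eq:equal1})--(\ref{eq:equal2}) guarantee that both items are fully allocated, completing Pareto optimality.

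Next, I would verify that the prices $p_1=v_{k,1}$, $p_2=v_{k,2}$ and budgets $b_i$ defined in the section support $x$ as a competitive equilibrium. For any agent $i<k$, the ordering gives $v_{i,1}/v_{i,2}\geq v_{k,1}/v_{k,2}=p_1/p_2$, so $v_{i,1}/p_1\geq v_{i,2}/p_2$: agent $i$'s best response is to spend her entire budget on item~1 (or any bundle yielding the same bang-per-token), and her budget $b_i=x_{i,1}p_1$ buys exactly the fraction $x_{i,1}$ of item~1 allocated to her. The argument for $i>k$ is symmetric using item~2. Agent $k$ satisfies $v_{k,1}/p_1=1=v_{k,2}/p_2$, so any affordable bundle is in her demand set; her budget $b_k=x_{k,1}p_1+x_{k,2}p_2$ exactly buys her mixed allocation. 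No item is overdemanded, so $(p,b,x)$ is a competitive equilibrium.

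For the running time, I would sort agents by $\rho_i$ in $O(n\log n)$ and precompute the prefix sums $S_1(k)=\sum_{i<k}1/v_{i,1}$ and $S_2(k)=\sum_{i>k}1/v_{i,2}$ in $O(n)$. Given these, substituting Equation~(\ref{eq:subs}) into (\ref{eq:equal1})--(\ref{eq:equal2}) reduces the system for a fixed $k$ to two linear equations in $x_{k,1},x_{k,2}$ that can be solved in $O(1)$ time. I would then perform a binary search over $k$, using the sign of the violated variable as the search direction (as indicated in the paper: $x_{k,1}<0$ means the high-$\rho$ agents already overclaim item~1, so the split must move; similarly for $x_{k,2}<0$). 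Existence of a feasible $k$ follows from Lemma~\ref{lem:max-min is pareto} together with the existence of a max-min allocation via the LP~(\ref{eq:lpmaxmin}). The total cost is $O(n\log n)+O(\log n)\cdot O(1)=O(n\log n)$.

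The main obstacle is the binary-search step: one must show that the feasibility test has monotone structure, so that the sign of whichever of $x_{k,1},x_{k,2}$ turns negative uniquely indicates the direction in which $k$ must move. This monotonicity follows because $S_1(k)$ is nondecreasing and $S_2(k)$ is nonincreasing in $k$, so the common value $V=x_{k,1}v_{k,1}+x_{k,2}v_{k,2}$ determined by the two equations varies monotonically with~$k$, forcing the sets $\{k: x_{k,1}\geq 0\}$ and $\{k: x_{k,2}\geq 0\}$ to be complementary half-lines in $\{1,\dots,n\}$. Writing out this monotonicity rigorously is the most technical piece of the argument; everything else is bookkeeping on the sorted order.
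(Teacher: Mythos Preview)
Your proposal is correct and follows essentially the same argument as the paper: the theorem in the paper is not given a separate proof block, but is declared established after the construction of Section~\ref{subsub:twoitems}, and your write-up recapitulates exactly that construction (sort by $\rho_i$, solve the two-variable system at a candidate split agent $k$, binary-search on $k$, set $p_j=v_{k,j}$ and budgets $b_i=\sum_j x_{i,j}p_j$, verify demand-set membership via the bang-per-token ordering). Your use of prefix sums to make each binary-search step $O(1)$ is a mild refinement of the paper's ``linear time per $k$'' claim, and your explicit acknowledgment that the monotonicity justifying binary search is the one genuinely technical point is well taken---the paper itself asserts this monotonicity (``we need to search only over the part which has a negative value'') without proving it, so you are not missing anything relative to the paper's own level of rigor.
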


% \ymnote{Note sure what is the running time. We have $\log n$ iterations.
% Seems like we can define all the variables as a function of $x_{k,1}$ and $x_{k,2}$ So we have:
% \[
% 1=x_{k,1}+(x_{k,1}v_{k,1}+x_{k,2}v_{k,2})\sum_{i=1}^{k-1} \frac{1}{v_{i,1}}
% \]
% and
% \[
% 1=x_{k,2}+(x_{k,1}v_{k,1}+x_{k,2}v_{k,2})\sum_{i=1}^{k-1} \frac{1}{v_{i,2}}
% \]
% can we get a close form solution?
% }

\section{Discussion and Open Problems}
\label{sec:discussion}

%\begin{itemize}
%	\item Extend to indivisible goods
%	\item Extend beyond additive valuations
%	\item Simple algorithms for special cases (beyond 2 agents or 2 items for max-min)
%	\item Some parameterization of unequal budgets. 
%\end{itemize}

In this paper we consider a market setting of agents with additive valuations over heterogeneous divisible items. 
We prove that given an arbitrary Pareto optimal allocation $x$, one can efficiently compute anonymous item prices and (possibly unequal) token budgets that support the allocation $x$; i.e., where every agent is maximally happy with her allocation $x_i$ given item prices and her budget. 
For the special case of max-min social welfare, we provide algorithms to compute the allocation itself in several cases of interest.

Extending our results to indivisible items: Obviously, any feasible allocation in the indivisible model is also feasible in the divisible model. However, a Pareto optimal allocation in the indivisible model need not be Pareto optimal in the divisible model. E.g., this may occur when agents gain from swapping carefully sized fractions of items but not items in whole.

%Although we receive our initial allocation as an input, and therefore may receive allocations where items are allocated to agents in whole, we require the input allocation to be Pareto optimal. An allocation may be Pareto optimal in the indivisible goods settings and yet be sub-optimal in the divisible goods settings, since agents are allowed to swap fractions of items and not only items in whole. Therefore, out model does accept some of the input allocations that are Pareto optimal in the indivisible goods setting. On the other hand, given prices and budgets, the demand set of an agent is also limited in the indivisible goods setting, giving her less options to deviate.

Another obvious relaxation is not assuming that valuations are additive. For example, submodular, subadditive, unit demand, etc. Yet another issue is removing our assumption that all valuations are strictly greater than zero. 

As described above, the max-min allocation itself, as well as the supporting prices and budgets, can be computed using more efficient algorithms than in the general case. Even more so, if either the number of agents or the number of items is $2$. It remains an open question whether such results are possible for any constant number of agents or items.

Finally, in contrast to competitive equilibria with equal budgets, our model considers unequal budgets. 
Indeed, allowing unequal budgets is essential for supporting any Pareto optimal input allocation. 
Yet, it would be interesting to suggest some parameterization to the diversity of budgets, and study various problems as a function of such parameters. 
E.g., one can set a bound on the level of diversity and characterize the allocations that can be supported under the corresponding constraints.  

\newpage

\bibliographystyle{alpha}
\bibliography{bibliography}

\end{document}